\newtheorem{lem}{Lemma}
\newtheorem{prop}{Proposition}
\newtheorem{remark}{Remark}
\title{Horizontal Product Differentiation in Varian's Model of Sales\footnote{I am grateful to Tadashi Sekiguchi for his guidance over the past decade. Without his helpful advice, I would not have been able to complete this study. I am also grateful to Haruo Imai, my supervisor, for his long direction. I also would like to thank Noriaki Matsushima, Takanori Maehara, and seminar participants at Kyoto University, National Graduate Institute for Policy Studies, the ARSC2011(Applied Regional Science Conference), and the NARSC2012(the North American Meeting of Regional Science Association International) for comments and suggestions. All errors are my own.}}
\author{Kuninori NAKAGAWA \footnote{Shizuoka University, Suruga-ku Ohya 836, Shizuoka, 422-8529, Japan. Tel: +81-54-237-5474. E-mail: nakagawa.kuninori@shizuoka.ac.jp ORCID iD 0000-0003-2429-7630}}
\date{\today}
\begin{document}
\maketitle
\begin{abstract}
We consider the explicit introduction of firms' choice of location to Varian's model of sales for a two-stage spatial competition model based on a standard Hotelling's linear city model. This model is the formalization of Varian's model of sales in the context of Hotelling's spatial competition. We obtain three main results. First, we show that there exists a symmetric subgame perfect equilibrium in which each firm chooses a symmetric mixed strategy equilibrium profile. This equilibrium includes symmetric location pairs and asymmetric location pairs. Second, the equilibrium behaviors in our model are randomized at both location and price stages. Third, we show that expected profits in a subgame perfect equilibrium are equal to the maximum monopoly profit. Thus, even when product differentiation is explicitly introduced into a Varian-type model, Varian's implication can be retained; the opportunity for profit in an informed market is lost with competition.
\end{abstract}

\newpage
\section{Introduction}
According to classical theory, temporary price discounts, popularly known as “sales,” should not occur in a market at equilibrium. However, Varian (1980) proposed a model in which such sales can occur in equilibrium in the form of a mixed strategy. This equilibrium price dispersion is widely known for the phenomenon of breaking the law of one price. A key assumption of Varian's model is that some consumers preferentially buy from particular firms without considering other firms; whether such a consumer purchases a unit of goods is determined solely by comparing that consumer's reserve price to the preferred firm's offered price. As a consequence, goods from different firms may be physically identical but may not be identical from the perspective of some consumers. In Varian's model, the heterogeneity among products is generated by asymmetric information: some consumers possess information on the presence of a product that others do not possess. In this article, we present a generalization of Varian (1980) with product differentiation in an explicit way, which is the same as in a spatial competition model of D'Aspremont et al (1979). In other words, we explicitly introduce a firm's choice of location to Varian's model and explore a two-stage game with spatial competition based on a standard Hotelling linear model, (Hotelling 1929).

In our study, we define Varian's model of sales as a model that belongs to a group of Hotelling's location price competition model. Considering the example of Apple fans or members of airlines' FFP (Frequently Flyer Programs), some enthusiastic consumers seek to support a firm's market power. With this type of purchasing behavior, price discrimination between informed and uninformed consumers is a typical economics issue. In oligopoly pricing models, Salop and Stiglitz (1977) and Varian (1980) developed the earliest successful models, which analyzed equilibrium price dispersion in this market structure. Varian focused on inter-temporal changes in price at one firm using mixed strategy. In this article, we analyze a horizontal product differentiation in Varian's model of sales. This motivation is reflected in two structures of our model.

First, as in a Hotelling linear model, consumers in our model are also located on a line segment. However, in our model, consumers are located on three points in the [0,1] interval: both opposite endpoints and the mid-center. We consider a duopoly model. Each firm is monopolist to the consumer group at each end, 0 and 1. In contrast, both firms compete for the consumers at the mid-center. In other words, as in Varian's model, consumers in our model are exogenously assigned to groups: group 1 consumers will only buy from firm 1; group 2 consumers will only buy from firm 2; group 3 consumers will buy from either firm 1 or firm 2 depending on which offers the highest net utility. This structure corresponds to the setting of consumer type in Varian's model of sales. For ease of reference, we call these groups 1-uninformed, 2-uninformed, and informed switchers; 1-uninformed and 2-uninformed consumers will collectively be called uninformed consumers or its fans.

Second, as in a spatial competition model, in our model, firms control two product characteristics: location and price, and each consumer incurs a transportation cost quadratic on the distance from the firm. Each consumer has a reservation price known to the firms and receives utility from the purchase equal to the reserve less the price and transport cost. Furthermore, in our model, each firm chooses its product characteristic near and within a territory where it fits relatively to its uninformed. Firm 1 chooses its location within [0,1/2] and firm 2 chooses its location within [1/2,1], because this model is symmetric with respect to the center at which informed switchers locate. From Varian's point of view, our settings will be interpreted as follows. When the uninformed consumers at either side move to the other side beyond the mid-center, they should be charged for their move with charges that exceed the limit amount of their reservation utility (willingness to pay). In other words, Varian's result depends on the assumption that at least some consumers are not loyal to a particular firm, and they are indifferent to price information offered by firms. In brief, Varian's model implies symmetric patterns of horizontal product differentiation derived from the consumers' preferences described above. In contrast to Varian's model, our model analyzes product differentiation between firms' products. Therefore, goods are heterogeneous even if all consumers know all of the information about the products such as price and other factors. 

In this article, we obtain three main results. First, we show that there exists a subgame perfect equilibrium in which the firms' strategies are symmetric across the mid-center of the line. In the equilibrium, each firm randomly chooses its location. Consequently, various equilibrium location patterns are realized. For example, this mixed-strategy equilibrium includes two typical positions, each firm at either end of the interval (maximum differentiation) and one with both firms at the center (minimum differentiation). In addition, this equilibrium includes symmetric location pairs and asymmetric location pairs.

Second, the equilibrium behaviors in our model are randomized at both location and price stages. This equilibrium in price subgame is typically called price dispersion, which is well known for the equilibrium in Varian's model of sales. However, we consider the equilibrium given explicit location choice. This result shows that the equilibrium price dispersion shown by Varian is not only limited to symmetric location pairs but also extended to asymmetric location pairs.

Moreover, we obtain a result that location choice is also randomized, in other words, equilibrium location is also dispersed in a subgame perfect equilibrium. Our result implies that differentiation between products would also be randomized; that is, the same as the case with price shown by Varian. The cause of ``dispersion'' of product differentiation is quite different from the cause of price dispersion. In the last section, we discuss this in detail.

Third, we show that expected profits in a subgame perfect equilibrium are equal to the maximum monopoly profit. The equilibrium profit of each firm is equal to a reservation value of each firm's uninformed consumer. This result is the same as Varian's equilibrium profit. Our model can be regarded as explicitly introducing product differentiation choice into Varian's sales model so that the result of firms' profits in a subgame perfect equilibrium in our model is the same as the result Varian showed for price competition only. In other words, this result shows that Bertrand's law on the loss of opportunity to gain excess profit in a competitive market still appears even when locational differentiation is explicitly considered. In the last section, we also discuss these results, particularly focusing on the dispersion at both location and price choices.

Our analysis proceeds as follows. We construct a two-stage game: firms choose a location in the first stage and a price in the second stage. We solve this game by backward induction. We analyze price competition on a location-pair subgame. We obtain a mixed strategy equilibrium in the pricing subgame for any pair of locations. Then, we analyze a subgame perfect location equilibrium in the first stage. Finally, in the concluding remarks, we discuss in detail the relationships between Varian's model and pay-off discontinuity in Hotelling's game.

\subsection{Literature}
In our model, we modify a two-stage spatial competition model of D'Aspremont et al. (1979) incorporating the elements of Varian's model of sales (Varian 1980). We consider a model that incorporates a spatial competition framework \`a la Hotelling, where firms compete in price and location, into Varian's model of sales. We address both sides, that is, from Varian's model and Hotelling's model. This section reviews the literature on Hotelling.

In the context of a spatial competition model, based on Hotelling's ``Main Street Model,'' firms' choice of location in an equilibrium involves the opposite of two extreme results, minimum and maximum differentiation.

Typically, a firm's choice of location depends on the severity of the price competition. It is possible that firms are agglomerated if the pressure of price competition is relaxed effectively by some means. We introduce three approaches other than the approach that we developed, which provide the result of agglomeration in the context of spatial competition.

The first approach is firms' collusive behavior with bargaining. For example, if firms establish a cartel and then maintain a higher price than the competitive price, the degree of locational differentiation is minimized. Jehiel (1992) examined this implication in an infinite horizon setting where, in the initial period, the firms choose their location and, in subsequent periods, charge the Nash bargaining solution prices. Then, in the unique equilibrium, both firms are located at the center of the market. Friedman and Thisse (1993) also presented a similar model where firms infinitely repeat bargaining in a price subgame and showed that the degree of differentiation is minimized. Matsumura and Matsushima (2011) showed that introducing a cost difference between two firms causes agglomeration to disappear in a collusive equilibrium. Roth and Zhao (2003) have also shown another result derived from either egalitarian or Kalai-Smorodinski bargaining solutions. The authors showed that there is a continuum of symmetric equilibria where the firms locate apart from each other.

In the second approach, firms compete in a multi-dimensional space. Irmen and Thisse (1998) showed that in the location game with n characteristics, firms choose to maximize differentiation for the dominant characteristic and to minimize differentiation for the other characteristics. Prior to Irmen and Thisse, in the context of a spatial competition model with quadratic transportation costs, Tabuchi (1994) extended this theory to a two-dimensional model. Then, Ansari et al. (1998) analyzed a three-dimensional model. Economides (1986) also studied a two-dimensional characteristic space model. Aoyagi and Okabe (1993) analyzed a two-dimensional model using a numerical analysis method.

In the third approach, Anderson et al. (1992) applied a discrete choice model to the spatial competition model. De Palma et al. (1985) showed that this model derives various types of location pairs in an equilibrium. This model implies that each firm can capture consumers from every point on the line. Because of this, severe competition will not arise even as firms move closer to each other, thereby giving firms the incentive to agglomerate at the center of the line. This can be considered an attempt to extend a consumer's behavior on the Hotelling linear model.

Additionally, in our model, a pay-off function is continuous in the location game but, on the other hand, it is discontinuous in the price subgame. The pay-off discontinuity at a price subgame is widely discussed in this literature. Dasgupta and Maskin (1986) showed that a mixed strategy equilibrium at price subgame exists for this type of location-price competition. Given this result, Osborne and Pitchik (1986, 1987) calculated a subgame perfect location point for this type of game. Bester et al. (1996), Matsumura and Matsushima (2009), and Eaton and Tweedle (2012) also analyzed a mixed strategy equilibrium on Hotelling's location game. Caplin and Nalebuff (1991) showed that quadratic transportation cost is a sufficient condition for the existence of a pure strategy equilibrium.

\section{The Model}
In our model, there are three types of consumers, $C_1,C_2,C_3$, whose preferences are represented by the interval $[0,1]$. $C_1$'s ideal point of preference is $0$. $C_2$'s ideal point of preference is $1$. $C_i$ is a fan of Firm $i(=1,2)$. $C_3$ evaluates either firm's products. $C_3$'s ideal point of preference is at an equal distance from the fans of each firm, that is, at $1/2$ in $[0,1]$.
Now we assume symmetry, that is, an equal density of consumers exists at $0$ and at $1$. We normalize the density of consumers at $0$ and at $1$ as $1$ and set the density of consumers at $\frac{1}{2}$ as $x$. We assume $x \ge 1$. Each consumer $C_k\:(k=1,2,3)$ purchases one unit of the product from either one of the two firms.

Both firms choose their own location on interval $[0,1]$. $z_i$ denotes the location of firm $i$. $p_i$ denotes the price of the firm's product. We assume $0 \le z_1 \le 1/2 \le z_2 \le 1$. Each firm $i$ chooses its own location and, then, chooses a price for the product.

We normalize as $1$ the reservation value of a product that is sold by a firm patronized by $C_1$ and $C_2$. Then, we define $y$ as $C_3$'s reservation value for a product of firm $i\:(i=1,2)$. Consumer $C_i$ evaluates the product of firm $3-i$ at $0$. The distance between each consumer's ideal point and the location of a firm measures the consumer's disutility. In our model, this disutility is measured with the quadratic cost function.\footnote{In what follows, we sometimes call this ``transportation cost.''} 
%

$C_3$ prefers a product with a lower price including transportation costs. We define each consumer's utility when the consumer purchases a product with the characteristic $z_i$ at $p_i$ as follows. Every consumer chooses their behavior to maximize their utility.
\begin{align}\label{uc1}
\begin{cases}
u^{C_1}=1-\{p_1+z_1^2\},\\
u^{C_2}=1-\{p_2+(1-z_2)^2\},\\
u^{C_3}=y-\{p_i+(\frac{1}{2}-z_i)^2\}
\end{cases}
\end{align}

Now, we consider each consumer's choice and each firm's profit in a price subgame given the pair $(z_1,z_2)$. Each firm $i$'s profit is defined by the sum of profit gained from $C_i$ and $C_3$. Here, we assume that production cost is $0$. We consider Firm $1$'s profit. $C_2$ does not purchase a product from Firm $1$. Let $\pi_1(p_1,p_2)$ denote Firm $1$'s profit. $\pi_1(p_1,p_2)$ is the total sum of $\pi_1^{C_1}(p_1)$, which denotes Firm $1$'s profit gained from $C_1$ and $\pi_1^{C_3}(p_1,p_2)$, which denotes Firm $1$'s profit gained from $C_3$.
\begin{align}\label{pi1all}
\pi_1(p_1,p_2)=\pi_1^{C_1}(p_1)+\pi_1^{C_3}(p_1,p_2).
\end{align}

First, we define $\pi_1^{C_1}(p_1)$. By \eqref{uc1}, $C_1$ purchases Firm $1$'s product when $1-\{p_1+z_1^2\} \ge 0$ is satisfied. Therefore, we obtain $p_1 \le 1-z_1^2$. Now, we define Firm $1$'s profit obtained from $C_1$ as follows.
\begin{align}\label{pi1m1}
\pi_1^{C_1}(p_1)=
\begin{cases}
p_1, &\text{if $p_1 \le 1-z_1^2,$}\\
0, &\text{otherwise}
\end{cases}
\end{align}

Next, we define $\pi_1^{C_3}(p_1,p_2)$. We define $C_3$'s utility, $u^{C_3}$, as follows.
\begin{align*}
u^{C_3}=
\begin{cases}
y-\{p_1+(\frac{1}{2}-z_1)^2\}, &\text{if buy from 1,}\\
y-\{p_2+(\frac{1}{2}-z_2)^2\}, &\text{if buy from 2,}\\
0, &\text{otherwise.}
\end{cases}
\end{align*}
Here, we find that $C_3$ purchases only Firm $1$'s product when the following two equations are satisfied,
$p_1+(1/2-z_1)^2 < p_2+(1/2-z_2)^2, p_1+(1/2-z_1)^2 \le y$. It follows that
\begin{align*}
p_1 < p_2+(1/2-z_2)^2-(1/2-z_1)^2, p_1 \le y-(1/2-z_1)^2.
\end{align*}
Then, again, $C_3$ is divided equally between Firm $1$ and Firm $2$ when $C_3$ is indifferent between choosing either Firm $1$'s or Firm $2$'s product, $p_1+(1/2-z_1)^2 = p_2+(1/2-z_2)^2 \le y$.

Now, we define $\pi_1^{C_3}$, which denotes a firm's profit obtained from $C_3$, as follows.
\begin{align}\label{pi1m3}
\pi_1^{C_3}(p_1,p_2)=
\begin{cases}
p_1x, &\text{if $p_1 < p_2+(\frac{1}{2}-z_2)^2-(\frac{1}{2}-z_1)^2$ and $p_1 \le y-(\frac{1}{2}-z_1)^2,$}\\
p_1\frac{1}{2}x, &\text{if $p_1=p_2+(\frac{1}{2}-z_2)^2-(\frac{1}{2}-z_1)^2$ and $p_1 \le y-(\frac{1}{2}-z_1)^2,$}\\
0, &\text{otherwise}
\end{cases}
\end{align}

Substituting \eqref{pi1m1} and \eqref{pi1m3} with \eqref{pi1all}, we obtain Firm $1$'s profit in a price subgame given a pair of location points. We also define Firm $2$'s profit in the same way because of symmetry. Since Firm $2$'s profit $\pi_2$ is the total sum of profit gained from $C_2$ and $C_3$, we define $\pi_2$ as follows.
\begin{align}\label{pi2all}
\pi_2(p_1,p_2)=\pi_2^{C_2}(p_2)+\pi_2^{C_3}(p_1,p_2)
\end{align}

Now, we limit the range of $x,y$ as follows
\begin{align}\label{misutenai}
yx < \frac{3}{4}\left(1+x\right)
\end{align}
\begin{lem}\label{misutenai2}
When \eqref{misutenai} holds, in an equilibrium of a price subgame corresponding to a pair of location points, Firm $1$ does not choose a price such that $p_1>1-z_1^2$.
\end{lem}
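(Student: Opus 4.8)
The plan is to prove the statement not as a delicate equilibrium fixed-point argument but as a pointwise strict-dominance claim: I will show that for \emph{every} price $p_2$ of Firm $2$, the benchmark price $p_1 = 1 - z_1^2$ yields strictly higher profit for Firm $1$ than any candidate high price $\hat p_1 > 1 - z_1^2$. Since a strictly dominated price can never lie in the support of a best response (pure or mixed), this rules out $p_1 > 1 - z_1^2$ in any price-subgame equilibrium. The observation driving everything is that once $\hat p_1 > 1 - z_1^2$, \eqref{pi1m1} gives $\pi_1^{C_1}(\hat p_1) = 0$, so Firm $1$'s entire profit comes from $C_3$; by \eqref{pi1m3} this is at most $\hat p_1 x \le x\left[y - \left(\tfrac12 - z_1\right)^2\right]$, since capturing any $C_3$ demand requires $\hat p_1 \le y - \left(\tfrac12 - z_1\right)^2$.

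Next I would bound the benchmark from below. The price $p_1 = 1 - z_1^2$ is feasible and strictly positive (indeed $1 - z_1^2 \ge \tfrac34$ as $z_1 \le \tfrac12$), and by \eqref{pi1m1} it secures the full fan profit $1 - z_1^2$. I then split on whether $\hat p_1$ earns anything from $C_3$. If it does not, then $\pi_1(\hat p_1, p_2) = 0 < 1 - z_1^2 \le \pi_1(1 - z_1^2, p_2)$ and we are done for that $p_2$. If it does, then $1 - z_1^2 < \hat p_1 \le y - \left(\tfrac12 - z_1\right)^2$ and $\hat p_1$ satisfies the undercutting inequality against $p_2$; since both defining conditions in \eqref{pi1m3} are relaxed when the own price is lowered, the benchmark $1 - z_1^2$ satisfies them strictly and hence \emph{fully} captures $C_3$, giving $\pi_1(1 - z_1^2, p_2) = (1 - z_1^2)(1 + x)$.

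The crux is then the single algebraic inequality
\begin{align*}
x\left[y - \left(\tfrac12 - z_1\right)^2\right] < (1 - z_1^2)(1 + x),
\end{align*}
which I would derive from \eqref{misutenai} as follows. It suffices to show $\tfrac34(1 + x) \le (1 - z_1^2)(1 + x) + x\left(\tfrac12 - z_1\right)^2$ for $z_1 \in [0, \tfrac12]$, since combined with $xy < \tfrac34(1+x)$ this gives the displayed strict inequality after rearranging. Expanding, the right-hand side minus $\tfrac34(1+x)$ equals $\tfrac14 + \tfrac{x}{2} - z_1^2 - x z_1$, which is decreasing in $z_1$ and vanishes exactly at $z_1 = \tfrac12$; hence it is nonnegative on $[0, \tfrac12]$. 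Chaining the bounds gives $\pi_1(\hat p_1, p_2) \le x\left[y - \left(\tfrac12 - z_1\right)^2\right] < (1 - z_1^2)(1 + x) = \pi_1(1 - z_1^2, p_2)$ in the remaining case, completing the pointwise strict domination.

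The step I expect to be the main obstacle is less the algebra than the bookkeeping that makes the dominance \emph{pointwise strict in $p_2$}: one must verify that the benchmark price inherits all of $C_3$ exactly on the event where $\hat p_1$ earns from $C_3$, using the monotonicity of both conditions in \eqref{pi1m3} in the own price, and handle the tie case ($\hat p_1$ splitting $C_3$) in the same way. It is also worth remarking that the threshold is tight: the auxiliary inequality binds precisely at $z_1 = \tfrac12$, which is exactly why the assumption carries the constant $\tfrac34(1+x)$ in \eqref{misutenai}, and the strictness of \eqref{misutenai} is what upgrades weak to strict domination at that boundary.
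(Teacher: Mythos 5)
Your proposal is correct and takes essentially the same route as the paper: both are strict-dominance arguments showing that charging $1-z_1^2$ (which keeps the fan and inherits any $C_3$ demand) beats every $p_1>1-z_1^2$ (whose profit is capped by the switcher market alone), with \eqref{misutenai} together with $1-z_1^2\ge\tfrac34$ closing the gap. The only differences are presentational: the paper compares expected profits directly against an arbitrary (possibly mixed) strategy of Firm $2$ via capture probabilities $\rho\le\rho^*$ and uses the cruder bound $p_1\le y$, so that the simple chain $x\left[y-\left(\tfrac12-z_1\right)^2\right]\le xy<\tfrac34(1+x)\le(1-z_1^2)(1+x)$ would have spared you the auxiliary inequality, whereas you argue pointwise in pure $p_2$ (with the case split on whether the deviation trades with $C_3$, correctly handling ties and the benchmark's full capture of $C_3$) and let convexity extend the dominance to mixed strategies.
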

\begin{proof}
Given $z_1,z_2$, we fix Firm $2$'s strategy in a price subgame. We must show that the total sum of profit gained from $C_1$ and $C_3$ when it charges the price $1-z_1^2$ is greater than the profit gained from $C_3$ when it charges the price $p_1>1-z_1^2$.

Let $\rho$ denote the probability of $C_3$ purchasing Firm $1$'s product when Firm $1$ charges $p_1$. Let $\rho^*$ denote the probability of $C_3$ purchasing Firm $1$'s product when Firm $1$ charges $1-z_1^2$.

Since $p_1>1-z_1^2$, we obtain $\rho \le \rho^*$. Here we show that $p_1 \rho x < (1-z_1^2)(1+\rho^*x)$.
When $\rho=0$, this equation is obvious. Suppose $\rho>0$.
Then, $p_1 \le y$ must hold. Now, by \eqref{misutenai}, we obtain
\begin{align*}
&(1-z_1^2)(1+\rho^* x)-p_1 \rho x \ge \rho(1-z_1^2)+[\rho^*(1-z_1^2)-\rho p_1]x \\
\ge&\rho[(1-z_1^2)(1+x)-p_1x] \ge \rho[(1-z_1^2)(1+x)-yx]>0
\end{align*}
\end{proof}
We have shown this lemma for Firm $1$. Because of symmetry, we can show that Firm $2$ does not charge a price such that $p_2 > 1-(1-z_2)^2$ in an equilibrium in a price subgame.

\section{Price Game}
In this section, we characterize an equilibrium profit in a price subgame corresponding to all pairs of $(z_1,z_2)$. Because of symmetry at $1/2$, without loss of generality, it is sufficient for us to focus on the case of $z_1+z_2 \le 1$. In the following, we focus on a mixed strategy equilibrium when both $z_1$ and $z_2$ are near to $1/2$. These include equilibria of Varian's model of price dispersion in which product differentiation is explicitly considered. However, according to a location pair that chooses their locations, a pure strategy equilibrium exists in a price subgame. See Appendix A for more details.

First, we characterize an equilibrium profit. When both $z_1$ and $z_2$ are close to $1/2$, the following equations are satisfied,
\begin{align}\label{tomonitikai}
\begin{cases}
1-z_1^2 < \{y-(1/2-z_1)^2\}(1+x),\\
1-(1-z_2)^2 < \{y-(1/2-z_2)^2\}(1+x).
\end{cases}
\end{align}

Let $(\pi^*_1,\pi_2^*)$ be an equilibrium profit vector, which is fixed in this price subgame.
Let $\bar{p}_i^*$ be the upper bound of firm $i$'s strategy in this equilibrium and $\underline{p}_i^*$ be
the lower bound of firm $i$'s strategy in this equilibrium.

We show $\pi_1^*=1-z_1^2$. Intuitively, initially, we find that Firm 1 does not charge $p_1>1-z_1^2$ because it might lose its own fan and, moreover, we show that Firm 1 sets $\underline{p}_1^* \ge (1-z_1^2)/(1+x)$. Finally, we determine that the equilibrium profit of Firm 1 is $\pi_1^*=1-z_1^2$. Once we determine the lower bound of the support of Firm 1's strategy and its equilibrium profit, we also determine Firm 2's lower bound $\underline{p}_2^*$.
We then obtain $\pi_2^*$. See Appendix B for more details.


Now, we show that the equilibrium profit of each firm is determined uniquely, $(\pi_1^*, \pi_2^*)=(1-z_1^2, {1-z_1^2}+\left[(1/2-z_1)^2-(1/2-z_2)^2\right](1+x))$. Here, we obtain the next proposition.
\begin{prop}\label{pricesubgameuniquneness}
\begin{align*}
(\pi_1^*, \pi_2^*)=(1-z_1^2, 1-z_1^2+[(1/2-z_1)^2-(1/2-z_2)^2](1+x)).
\end{align*}
\end{prop}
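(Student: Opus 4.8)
The plan is to characterize any mixed-strategy equilibrium of the price subgame directly through its supports, rather than computing the mixing distributions in closed form. Throughout I write $M_1 = 1 - z_1^2$ and $M_2 = 1 - (1-z_2)^2$ for the two captive monopoly prices, and I record consumer $C_3$'s comparison in terms of the effective prices $q_i = p_i + (\tfrac12 - z_i)^2$, so that $C_3$ buys from firm $i$ exactly when $q_i < q_{3-i}$. Under the maintained normalization $z_1 + z_2 \le 1$ one has $(\tfrac12 - z_1)^2 \ge (\tfrac12 - z_2)^2$, so firm $2$ is the firm nearer the center. Let $F_1,F_2$ denote the equilibrium price distributions and $[\underline{p}_i^*, \bar{p}_i^*]$ their supports; by Lemma~\ref{misutenai2} these supports lie in $[0,M_1]$ and $[0,M_2]$ respectively, so along either support the firm keeps its own fans and its profit takes the two-part form $p_i\bigl(1 + x\,\Pr[\text{win }C_3]\bigr)$.

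First I would assemble the standard structural facts of a Varian/Narasimhan-type mixing equilibrium: the expected profit $\pi_i^*$ is constant across firm $i$'s support, the distributions carry no atom except possibly a single mass point at the top of one firm's support, and the supports are intervals. Granting these, two elementary observations do most of the work. (i) \emph{A guarantee:} by charging exactly $M_i$ firm $i$ keeps its fans for sure, so $\pi_i^* \ge M_i$. (ii) \emph{The bottoms coincide in effective-price terms:} if firm $i$'s minimal effective price were strictly below firm $j$'s, then on the corresponding price range firm $i$ would win $C_3$ with certainty while keeping its fans, so its profit there would be strictly increasing and it would place no mass near the bottom of its own support, a contradiction. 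Hence $\underline{p}_1^* + (\tfrac12 - z_1)^2 = \underline{p}_2^* + (\tfrac12 - z_2)^2$, giving $\underline{p}_2^* = \underline{p}_1^* + (\tfrac12 - z_1)^2 - (\tfrac12 - z_2)^2$. At this common lowest effective price each firm undercuts the other with probability one (no atom at the bottom), so $\pi_i^* = \underline{p}_i^*(1+x)$.

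Next I would pin down $\pi_1^*$. Combining (i) with $\pi_i^* = \underline{p}_i^*(1+x)$ already yields $\pi_1^* = \pi_2^* + \bigl[(\tfrac12 - z_2)^2 - (\tfrac12 - z_1)^2\bigr](1+x)$. To fix the level I claim that firm $1$'s support reaches the (weakly) higher effective price and that firm $2$ carries no mass there, so that at its top price (at most $M_1$) firm $1$ is undercut for $C_3$ with probability one and therefore $\pi_1^* = \bar{p}_1^* \le M_1$; with the guarantee this forces $\pi_1^* = M_1$ and $\bar{p}_1^* = M_1$. The claim is proved by contradiction: were firm $2$'s top effective price strictly the larger, the same top-of-support argument would give $\pi_2^* = \bar{p}_2^* \le M_2$, hence $\pi_2^* = M_2$, and then the displayed relation would force $\pi_1^* = M_2 + \bigl[(\tfrac12 - z_2)^2 - (\tfrac12 - z_1)^2\bigr](1+x) < M_1$, contradicting $\pi_1^* \ge M_1$ (in the symmetric case $z_1+z_2=1$ the two tops coincide and no contradiction is needed).

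Finally I would assemble the pieces. From $\pi_1^* = M_1$ and $\pi_1^* = \underline{p}_1^*(1+x)$ I read off $\underline{p}_1^* = M_1/(1+x)$, matching the lower bound announced in the text, and then $\pi_2^* = \underline{p}_2^*(1+x) = M_1 + \bigl[(\tfrac12 - z_1)^2 - (\tfrac12 - z_2)^2\bigr](1+x)$, which is the stated value; uniqueness is automatic since every equilibrium satisfies (i)--(ii). I expect the genuine work to be hidden in the preliminary structural facts rather than in this profit arithmetic: one must exclude atoms in the interior and at the bottom, locate the single mass point (it sits at firm $1$'s top price $M_1$, where firm $1$ simply harvests its fans), and---the real constraint---check that the resulting supports respect the caps $\bar{p}_i^* \le M_i$ of Lemma~\ref{misutenai2}. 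This, together with $\underline{p}_2^* \le M_2$ (needed so firm $2$ still keeps its fans at its bottom), is precisely what the near-center hypothesis \eqref{tomonitikai} secures, and verifying it in the asymmetric case $z_1 + z_2 < 1$ is the step I would expect to be most delicate.
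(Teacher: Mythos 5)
Your endpoint bookkeeping is sound and genuinely different from the paper's route: you pin the profit vector by forcing the two supports' bottoms to coincide in effective-price terms (giving $\pi_i^*=\underline{p}_i^*(1+x)$ and hence $\pi_2^*-\pi_1^*=T_1(1+x)$) and then identifying $\pi_1^*=\bar{p}_1^*=1-z_1^2$ at the top, with Lemma~\ref{misutenai2} supplying the cap and Lemma~\ref{pi2starcondition}'s inequality $1-z_1^2+T_1(1+x)\ge 1-(1-z_2)^2$ closing your contradiction branch. The paper (Appendix B, Lemmas~\ref{ikaarienai}--\ref{pi2upperbound}) never asserts bottom coincidence or the identity $\pi_i^*=\underline{p}_i^*(1+x)$; it works entirely with one-sided guarantees and explicit undercutting deviations ($\tilde{p}_2$, $\dot{p}_2$), invoking atom structure only once, locally, at the coincident tops via the comparison of \eqref{uptie} and \eqref{downnige}. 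That design choice is not cosmetic, and it is where your proposal has a genuine gap: the ``standard structural facts'' you grant are \emph{false} in this game, as witnessed by the paper's own equilibria. In Proposition~\ref{toitoi1-z1^2+rent}, Firm $1$'s support is $[\frac{1-z_1^2}{1+x},\,y-(1/2-z_1)^2]\cup\{1-z_1^2\}$ --- not an interval, with an isolated atom at $1-z_1^2$ --- and Firm $2$ \emph{also} carries an atom at its top $y-(1/2-z_2)^2$ (the displayed $F^{2*}$ jumps to $1$ there), so your parenthetical claim that the single mass point ``sits at firm $1$'s top price $M_1$'' is wrong; in the symmetric equilibrium of the Remark both distributions have top atoms. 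The culprit is exactly $C_3$'s reservation value: above effective price $y$ the switchers exit, the Bertrand discontinuity vanishes, and atoms and support gaps at the high end become consistent with equilibrium --- the textbook Varian/Narasimhan structure you import holds only below the effective price $y$.

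Since the proposition asserts uniqueness of profits across \emph{all} equilibria, a proof that ``grants'' structure contradicted by the constructed equilibria does not go through as written. The good news is that your deduction chain uses less than you grant: bottom coincidence needs only payoff constancy on the support, a no-atom-at-the-common-bottom argument, and a check that the common bottom effective price lies weakly below $y$ (which, at the candidate value $\frac{1-z_1^2}{1+x}+(1/2-z_1)^2$, is precisely \eqref{tomonitikai}, as you note); and your top argument survives if you replace ``no atom except one'' with the weaker, provable statement that no two atoms can sit at the same effective price weakly below $y$, handling the symmetric case $z_1+z_2=1$ by running the argument from whichever firm lacks the top atom. You would also need to rule out degenerate equilibria in which both supports lie entirely above effective price $y$ (there each firm would atom at its monopoly price, and \eqref{tomonitikai} makes undercutting to capture $C_3$ profitable, so these fail). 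With those game-specific replacements your route is a legitimate and arguably more transparent alternative to the paper's deviation chain; as submitted, however, the deferred ``standard'' preliminaries are not standard here, and that is where the proof would fail.
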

\begin{proof}
By Lemma~\ref{pi1uniquneness}-\ref{pi2upperbound}, we obtain this result. See Appendix B for more details.
\end{proof}

However, each firm cannot obtain this profit in a pure strategy equilibrium.
In the following, we show a mixed strategy equilibrium for firms to obtain this equilibrium profit.

\begin{prop}\label{toitoi1-z1^2+rent}
If
\begin{align*}
\begin{cases}
1-z_1^2 < \{y-(1/2-z_1)^2\}(1+x),\\
y-(1/2-z_2)^2 < 1-(1-z_2)^2
\end{cases}
\end{align*}
holds, then, an equilibrium exists such that the equilibrium strategy of each firm is
\begin{align}\label{toitoi1}
F^{1*}(p)=
\begin{cases}
0, &\text{if $p \le \frac{1-z_1^2}{1+x},$}\\
1-\frac{\pi_2^*-(p+T_1)}{(p+T_1)x}, &\text{if $p \in [\frac{1-z_1^2}{1+x},y-(1/2-z_1)^2],$}\\
F^{1*}(y-(1/2-z_1)^2), &\text{if $p \in [y-(1/2-z_1)^2,1-z_1^2),$}\\
1, &\text{if $p \ge 1-z_1^2.$}
\end{cases}
\end{align}
\begin{align}\label{toitoi2}
F^{2*}(p)=
\begin{cases}
0, &\text{if $p \le \frac{1-z_1^2}{1+x}+T_1,$}\\
1-\frac{\pi_1^*-(p-T_1)}{(p-T_1)x}, &\text{if $p \in [\frac{1-z_1^2}{1+x}+T_1,y-(1/2-z_2)^2),$}\\
1, &\text{if $p \ge y-(1/2-z_2)^2.$}
\end{cases}
\end{align}
where $T_1=(1/2-z_1)^2-(1/2-z_2)^2 \ge 0$. In this equilibrium, each firm's equilibrium profit is
$\pi_1^*=1-z_1^2, \pi_2^*=1-z_1^2+\{(1/2-z_1)^2-(1/2-z_2)^2\}(1+x)$.
\end{prop}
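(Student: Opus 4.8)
The plan is to verify directly that the pair $(F^{1*},F^{2*})$ is a Nash equilibrium of the price subgame, using the two standard requirements for a mixed-strategy equilibrium: (i) each firm's expected profit is constant and equal to its claimed $\pi_i^*$ on the support of its own strategy, given the rival's strategy; and (ii) no price outside the support yields strictly more than $\pi_i^*$. Before either step I would record the exact expected-profit expression for a firm posting a given price against the rival's distribution. Using \eqref{pi1all}--\eqref{pi1m3} and Lemma~\ref{misutenai2} (so that prices above the fan-monopoly value $1-z_1^2$, resp.\ $1-(1-z_2)^2$, may be ignored), Firm $1$ posting $p_1 \le y-(1/2-z_1)^2$ collects $p_1$ from its fan $C_1$ and wins the switchers $C_3$ exactly when $p_2 > p_1+T_1$, so that $\pi_1(p_1)=p_1\bigl[1+x\,(1-F^{2*}(p_1+T_1))\bigr]$; symmetrically $\pi_2(p_2)=p_2\bigl[1+x\,(1-F^{1*}(p_2-T_1))\bigr]$ for $p_2 \le y-(1/2-z_2)^2$. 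The $T_1$-shift is the bookkeeping device that aligns the two supports.

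For (i), I would substitute $F^{2*}$ into $\pi_1$ and $F^{1*}$ into $\pi_2$. Because each CDF was built by inverting precisely the rival's indifference equation, the substitution collapses: on Firm $1$'s competitive support one gets $\pi_1(p_1)=p_1+(\pi_1^*-p_1)=1-z_1^2$, and likewise $\pi_2(p_2)=\pi_2^*$. I would then check that both candidate CDFs are legitimate distributions: each is strictly increasing (differentiating the middle branch gives an expression proportional to $\pi_j^*>0$), equals $0$ at its stated lower endpoint (a short computation using $\pi_2^*=1-z_1^2+T_1(1+x)$), and rises to $1$. The two atoms must then be located and their masses shown positive: Firm $1$ carries an atom at the fan-monopoly price $1-z_1^2$ and Firm $2$ an atom at the reservation-binding price $y-(1/2-z_2)^2$. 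Positivity of both masses follows from $y-(1/2-z_1)^2<1-z_1^2$, which I would derive from the hypothesis $y-(1/2-z_2)^2<1-(1-z_2)^2$ together with $z_1+z_2\le 1$ (these give $y<z_2+\tfrac14\le\tfrac54-z_1$).

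For (ii), I would rule out each deviation type. Posting below the support lower bound wins all of $C_3$ with certainty, giving $p(1+x)$, which is increasing in $p$ and hence dominated by the lower-bound value $\pi_i^*$. For Firm $1$ a price in the gap $(y-(1/2-z_1)^2,\,1-z_1^2)$ loses $C_3$ entirely and earns only $p_1<1-z_1^2=\pi_1^*$; prices above $1-z_1^2$ are excluded by Lemma~\ref{misutenai2}. For Firm $2$ the only remaining candidate is to abandon $C_3$ and charge its own fan-monopoly price $1-(1-z_2)^2$, which fails because $1-(1-z_2)^2\le\pi_2^*$; this I would obtain from $z_1\le 1-z_2$ (so $z_1^2\le(1-z_2)^2$) and $T_1\ge 0$, which force $\pi_2^*-[1-(1-z_2)^2]=[(1-z_2)^2-z_1^2]+T_1(1+x)\ge 0$.

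The part I expect to be most delicate is the interaction of the two atoms with the tie-breaking and reservation conventions at the reservation-binding prices. At $p_1=y-(1/2-z_1)^2$ the switchers are exactly indifferent (their surplus from Firm $1$ is $0$) and tie with Firm $2$'s atom, so by the equal-split rule Firm $1$ collects only half of $C_3$ and earns strictly less than the limiting value $\pi_1^*$; this is harmless only because Firm $1$ places zero probability on that single point, but it must be checked so the boundary furnishes no profitable deviation. Conversely, Firm $2$'s atom at $y-(1/2-z_2)^2$ earns exactly $\pi_2^*$ precisely because it captures the entire switcher group whenever Firm $1$ sits at its own atom $1-z_1^2$ and thereby concedes $C_3$. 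Verifying that these two atom masses lock together to deliver $\pi_2^*$ --- rather than leaving either firm a profitable local move --- is the crux of the argument.
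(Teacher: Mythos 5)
Your proposal is correct and follows essentially the same route as the paper's Appendix~C proof: verify by substitution that each firm's payoff is constant and equal to $\pi_i^*$ on its support, then rule out deviations below the lower bound, in the gap $(y-(1/2-z_1)^2,\,1-z_1^2)$, at the tie point $\ddot{p}_1=y-(1/2-z_1)^2$ (where the equal-split rule yields $\frac{1}{2}(\ddot{p}_1+\pi_1^*)<\pi_1^*$), and above the fan-monopoly price via Lemma~\ref{misutenai2}. Your extra checks---CDF monotonicity and endpoint values, positivity of the two atoms, and the explicit bound $1-(1-z_2)^2\le\pi_2^*$ for Firm $2$'s deviation---merely spell out details the paper leaves to symmetry and to Lemma~\ref{pi2starcondition}.
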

\begin{proof}
See Appendix C.
\end{proof}

\begin{remark}
Equilibrium strategies shown by Proposition~\ref{toitoi1-z1^2+rent} involve symmetric location pairs.
At symmetric location pairs, a symmetric equilibrium strategy profile exists where equilibrium profit is equal to equilibrium profit in an asymmetric equilibrium.
For example, an equilibrium exists such that
\begin{align*}
F^{1*}(p)=
\begin{cases}
0, &\text{if $p \le \frac{1-z_1^2}{1+x},$}\\
1-\frac{\pi_2^*-p}{px}, &\text{if $p \in [\frac{1-z_1^2}{1+x},y-(1/2-z_1)^2],$}\\
F^{1*}(y-(1/2-z_1)^2), &\text{if $p \in [y-(1/2-z_1)^2,1-z_1^2),$}\\
1, &\text{if $p \ge 1-z_1^2.$}
\end{cases}
\end{align*}
\begin{align*}
F^{2*}(p)=
\begin{cases}
0, &\text{if $p \le \frac{1-(1-z_2)^2}{1+x},$}\\
1-\frac{\pi_1^*-p}{px}, &\text{if $p \in [\frac{1-(1-z_2)^2}{1+x},y-(1/2-z_2)^2),$}\\
F^{2*}(y-(1/2-z_2)^2), &\text{if $p \in [y-(1/2-z_2)^2,1-(1-z_2)^2),$}\\
1, &\text{if $p \ge 1-(1-z_2)^2.$}
\end{cases}
\end{align*}
\end{remark}

\begin{prop}\label{1toi2tikaimixed1-z1^2+rent}
If
\begin{align*}
\begin{cases}
1-z_1^2 < \{y-(1/2-z_1)^2\}(1+x),\\
1-(1-z_2)^2 \le y-(1/2-z_2)^2
\end{cases}
\end{align*}
holds, then an equilibrium exists such that the equilibrium strategy of each firm is
\begin{align*}
F^{1*}(p)=
\begin{cases}
0, &\text{if $p \le \frac{1-z_1^2}{1+x},$}\\
1-\frac{\pi_2^*-(p+T_1)}{(p+T_1)x}, &\text{if $p \in [\frac{1-z_1^2}{1+x},1-(1-z_2)^2-T_1],$}\\
F(1-(1-z_2)^2-T_1), &\text{if $p \in [1-(1-z_2)^2-T_1,1-z_1^2),$}\\
1, &\text{if $p \ge 1-z_1^2.$}
\end{cases}
\end{align*}
\begin{align*}
F^{2*}(p)=
\begin{cases}
0, &\text{if $p \le \frac{1-z_1^2}{1+x}+T_1,$}\\
1-\frac{\pi_1^*-(p-T_1)}{(p-T_1)x}, &\text{if $p \in [\frac{1-z_1^2}{1+x}+T_1,1-(1-z_2)^2),$}\\
1, &\text{if $p_2 \ge 1-(1-z_2)^2$}
\end{cases}
\end{align*}
where $T_1=(1/2-z_1)^2-(1/2-z_2)^2 \ge 0.$ In this equilibrium, each firm's equilibrium profit is $\pi_1^*=1-z_1^2$, and $\pi_2^*=1-z_1^2+\{(1/2-z_1)^2-(1/2-z_2)^2\}(1+x)$.
\end{prop}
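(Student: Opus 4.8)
The plan is to verify directly that the pair $(F^{1*},F^{2*})$ constitutes a mixed-strategy Nash equilibrium of the price subgame and that the induced profits are the asserted $\pi_1^*=1-z_1^2$ and $\pi_2^*=1-z_1^2+T_1(1+x)$; the architecture parallels the proof of Proposition~\ref{toitoi1-z1^2+rent}, with the single but decisive change that the hypothesis $1-(1-z_2)^2 \le y-(1/2-z_2)^2$ now forces the upper end of Firm $2$'s support to be its own-fan cap $1-(1-z_2)^2$ rather than the switcher cap $y-(1/2-z_2)^2$. First I would confirm that $F^{1*}$ and $F^{2*}$ are legitimate distribution functions: each is nondecreasing and right-continuous, takes values in $[0,1]$, rises from $0$ at the stated lower bound to $1$ at the stated upper bound, and places an atom at the top of its support (Firm $1$ at $p=1-z_1^2$, Firm $2$ at $p=1-(1-z_2)^2$). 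Using the two hypotheses I would also check that the increasing branches are nonempty, i.e.\ that $\tfrac{1-z_1^2}{1+x}$ does not exceed $1-(1-z_2)^2-T_1$.

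Next I would compute each firm's expected profit as a function of its own price, holding the rival's strategy fixed. For the switchers $C_3$ the relevant comparison is of effective prices, so Firm $1$ captures $C_3$ exactly when $p_1<p_2-T_1$ and Firm $2$ exactly when $p_2<p_1+T_1$; hence, given the rival plays $F^{j*}$, Firm $1$'s probability of winning $C_3$ is $1-F^{2*}(p_1+T_1)$ and Firm $2$'s is $1-F^{1*}(p_2-T_1)$. For $p_1$ in the increasing branch of Firm $1$'s support, $p_1\le 1-z_1^2$ keeps its own fan $C_1$, so its profit is $p_1+p_1 x\,[1-F^{2*}(p_1+T_1)]$; substituting the closed form of $F^{2*}$ (whose $-T_1$ in numerator and denominator exactly cancels the argument shift by $T_1$) collapses this to the constant $\pi_1^*$, and the analogous substitution for Firm $2$ gives the constant $\pi_2^*$. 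I would pin the two constants down by evaluating at the respective lower support bounds, where each firm undercuts for sure and earns $p(1+x)$: at $p_1=\tfrac{1-z_1^2}{1+x}$ this yields $\pi_1^*=1-z_1^2$, and at $p_2=\tfrac{1-z_1^2}{1+x}+T_1$ it yields $\pi_2^*=1-z_1^2+T_1(1+x)$, matching the claim.

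Then I would rule out profitable deviations. Prices strictly below a firm's lower bound already win $C_3$ with probability one, so lowering the price only reduces revenue and cannot beat the equilibrium payoff. Prices in the flat interval $[1-(1-z_2)^2-T_1,\,1-z_1^2)$ of $F^{1*}$ push $p_1+T_1$ to or above the top of Firm $2$'s support, so Firm $1$ wins $C_3$ with probability zero and earns only the fan revenue $p_1\le 1-z_1^2=\pi_1^*$; upward deviations past $1-z_1^2$ (respectively past $1-(1-z_2)^2$ for Firm $2$) are excluded by Lemma~\ref{misutenai2} and its symmetric counterpart. The one genuinely delicate point --- and the step I expect to be the main obstacle --- is the treatment of the atoms at the tops of the two supports together with the role of the second hypothesis.

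Concretely, the indifference argument for Firm $2$ must remain valid at its top price $p_2=1-(1-z_2)^2$, where Firm $2$ still serves $C_3$ with positive probability against Firm $1$'s atom at $1-z_1^2$; this requires precisely that $p_2\le y-(1/2-z_2)^2$ so that $C_3$ is still willing to buy, which is exactly the hypothesis $1-(1-z_2)^2\le y-(1/2-z_2)^2$ and is the feature distinguishing this proposition from Proposition~\ref{toitoi1-z1^2+rent}. I would further verify that the effective-price ties between the two atoms are probability-zero events: a short computation gives $1-z_1^2+T_1-[1-(1-z_2)^2]=1-z_1-z_2\ge 0$, with equality only when $z_1+z_2=1$, so whenever the location pair is strictly asymmetric Firm $2$ strictly wins $C_3$ at its top atom and no positive-probability tie arises; the knife-edge symmetric case $z_1+z_2=1$ (where $T_1=0$ and the two atoms coincide) is handled by the $50$--$50$ split and matches the symmetric profile in the Remark. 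Assembling the constant-profit property on the supports, the unprofitability of all deviations, and the atom/tie bookkeeping yields the equilibrium and the stated profits.
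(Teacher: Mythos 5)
Your proposal is correct and follows essentially the same route as the paper's own proof, which likewise verifies the profile directly (deferring the constant-profit substitution to the argument for Proposition~\ref{toitoi1-z1^2+rent} in Appendix C) and then checks the deviations below the lower bounds, on Firm $1$'s flat interval $(1-(1-z_2)^2-T_1,\,1-z_1^2)$, and above the fan caps via Lemma~\ref{misutenai2}; your explicit atom/tie bookkeeping, including the computation $1-z_1^2+T_1-[1-(1-z_2)^2]=1-z_1-z_2\ge 0$ and the observation that the hypothesis $1-(1-z_2)^2\le y-(1/2-z_2)^2$ is exactly what keeps $C_3$ willing to buy at Firm $2$'s top price, makes explicit what the paper leaves implicit. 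One cosmetic correction: in the knife-edge case $z_1+z_2=1$ the flat segment is empty and both $F^{1*}$ and $F^{2*}$ rise continuously to $1$, so there are no atoms and no $50$--$50$ split to invoke; relatedly, at the single point $p_1=1-(1-z_2)^2-T_1$ Firm $1$'s payoff against Firm $2$'s atom is $\tfrac{1}{2}\left(p_1+\pi_1^*\right)<\pi_1^*$ rather than $\pi_1^*$, which is harmless for the equilibrium because that point carries $F^{1*}$-measure zero.
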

\begin{proof}
In an equilibrium, $\pi_i^*, (i=1,2)$ is constant regardless of the selected strategy. This is the same as we have shown for Proposition~\ref{toitoi1-z1^2+rent}.
We show that strategy profiles are an equilibrium. By $\pi_1^* = 1-z_1^2$, Firm $1$ does not deviate to a price strictly lower than $\frac{1-z_1^2}{1+x}$. Because Firm $1$'s lower bound of support for its equilibrium strategy is $\frac{1-z_1^2}{1+x}$, it does not follow that Firm $2$ charges a price strictly lower than $\frac{1-z_1^2}{1+x}+T_1$, which is the lower bound of its strategy support.

Next, we show that Firm $1$ does not deviate, choosing a price $\dot{p}_1$ such that $1-(1-z_1)^2-T_1 < \dot{p}_1 < 1-z_1^2$.
Supposing that $F^{2*}(p)$ is given, it follows with a probability of $1$ that Firm $2$'s price charged to $C_3$ including transport costs is strictly lower than Firm $1$'s price. Thus, only $C_1$ purchases Firm $1$'s product. On the other hand, in this case, Firm $1$'s profit is strictly lower than $\pi_1^*$. Thus, Firm $1$ does not deviate.
\end{proof}

\begin{lem}\label{pi2starcondition}
\begin{align*}
&1-z_1^2+[(1/2-z_1)^2-(1/2-z_2)^2](1+x) > 1-(1-z_2)^2, \quad\text{if $z_1+z_2 < 1,$}\\
&1-z_1^2+[(1/2-z_1)^2-(1/2-z_2)^2](1+x) = 1-(1-z_2)^2, \quad\text{if $z_1+z_2 = 1.$}
\end{align*}
\end{lem}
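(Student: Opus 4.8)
The plan is to reduce both assertions to a single sign determination by forming the difference of the two sides and factoring it. Define
\[
D := 1-z_1^2+[(1/2-z_1)^2-(1/2-z_2)^2](1+x) - \bigl(1-(1-z_2)^2\bigr),
\]
so that the first line of the lemma is the statement $D>0$ when $z_1+z_2<1$, and the second is $D=0$ when $z_1+z_2=1$. The strategy is to exhibit a factorization of $D$ in which $(1-z_1-z_2)$ appears as a multiplicative factor, after which the sign of $D$ is immediate from the standing restriction $0\le z_1\le 1/2\le z_2\le 1$ together with $x\ge 1$.

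First I would rewrite the two quadratic differences inside $D$ using difference-of-squares identities. For the bracketed term, taking $a=1/2-z_1$ and $b=1/2-z_2$ gives $(1/2-z_1)^2-(1/2-z_2)^2=(z_2-z_1)(1-z_1-z_2)$. For the remaining quadratic contribution, taking $a=1-z_2$ and $b=z_1$ gives $(1-z_2)^2-z_1^2=(1-z_1-z_2)(1+z_1-z_2)$. Substituting both into $D$ and pulling out the shared factor yields
\[
D=(1-z_1-z_2)\,\bigl[(1+z_1-z_2)+(z_2-z_1)(1+x)\bigr].
\]

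The key simplification is then to expand the bracket, where the $z_1,z_2$ terms without $x$ cancel: $(1+z_1-z_2)+(z_2-z_1)(1+x)=1+(z_2-z_1)x$, so that
\[
D=(1-z_1-z_2)\,\bigl(1+(z_2-z_1)x\bigr).
\]
Since $0\le z_1\le 1/2\le z_2\le 1$ we have $z_2-z_1\ge 0$, and because $x\ge 1>0$ the second factor satisfies $1+(z_2-z_1)x>0$. Hence the sign of $D$ coincides with the sign of $1-z_1-z_2$: when $z_1+z_2<1$ the first factor is strictly positive, so $D>0$; when $z_1+z_2=1$ the first factor vanishes, so $D=0$. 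This establishes both lines of the lemma.

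I do not expect a genuine obstacle in this argument; the only nonroutine observation is that both quadratic differences carry the common factor $(1-z_1-z_2)$, which is precisely what makes the bracket collapse to $1+(z_2-z_1)x$. Once that factorization is in hand, the conclusion is a matter of reading off signs under the maintained assumptions $z_2\ge z_1$ and $x\ge 1$.
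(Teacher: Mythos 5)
Your proof is correct: both difference-of-squares identities hold, the bracket does collapse to $1+(z_2-z_1)x$, and since $0\le z_1\le 1/2\le z_2\le 1$ together with $x\ge 1$ gives $1+(z_2-z_1)x>0$, the sign of $D=(1-z_1-z_2)\bigl(1+(z_2-z_1)x\bigr)$ is exactly the sign of $1-z_1-z_2$, which yields both lines of the lemma. The paper's own proof is a one-line back-reference to \eqref{neweq}, whose final line is precisely the assertion $D/(1+x)\ge 0$ (stated there under $z_1+z_2\le 1$, in the course of proving Lemma~\ref{pi1uniquneness}); your factorization is the natural self-contained verification of that display, so the two arguments rest on the same elementary computation rather than on genuinely different ideas. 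If anything, your version is tighter where the paper is loose: \eqref{neweq} records only the weak inequality $\ge 0$, so obtaining the lemma ``immediately'' from it tacitly requires the strict/equality dichotomy that your factorization makes explicit --- $D>0$ when $z_1+z_2<1$ because the first factor is strictly positive, and $D=0$ when $z_1+z_2=1$ because it vanishes.
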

\begin{proof}
We obtain the result immediately from \eqref{neweq}
\end{proof}

\begin{lem}\label{pi2starcondition2}
If $1-z_1^2 < \{y-(1/2-z_1)^2\}(1+x)$ is satisfied, then
\begin{align*}
\{y-(1/2-z_2)^2\}(1+x) > 1-z_1^2+[(1/2-z_1)^2-(1/2-z_2)^2](1+x).
\end{align*}
\end{lem}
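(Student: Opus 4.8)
The plan is to reduce the claimed inequality directly to the hypothesis by algebraic rearrangement; no auxiliary construction or separate case analysis is needed. First I would bring every term of the conclusion to one side and study the difference of the left-hand side minus the right-hand side, namely
\begin{align*}
\{y-(1/2-z_2)^2\}(1+x) - (1-z_1^2) - [(1/2-z_1)^2-(1/2-z_2)^2](1+x).
\end{align*}
The goal is to show this expression is strictly positive, since that is precisely the stated strict inequality.

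The key step is to collect the two terms that carry the common factor $(1+x)$ and notice that the two occurrences of $(1/2-z_2)^2$ cancel. Concretely, inside the factor $(1+x)$ the combination $y-(1/2-z_2)^2 - [(1/2-z_1)^2-(1/2-z_2)^2]$ collapses to $y-(1/2-z_1)^2$, so the whole difference simplifies to
\begin{align*}
\{y-(1/2-z_1)^2\}(1+x) - (1-z_1^2).
\end{align*}
At this point I would invoke the hypothesis $1-z_1^2 < \{y-(1/2-z_1)^2\}(1+x)$ directly: it says exactly that the displayed quantity is strictly positive, which completes the argument.

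Since the proof is essentially a single cancellation followed by a quotation of the hypothesis, I do not expect a genuine obstacle. The only point requiring a little care is keeping track of signs when expanding $-[(1/2-z_1)^2-(1/2-z_2)^2](1+x)$ so that the $(1/2-z_2)^2$ terms cancel rather than double up; once that bookkeeping is done correctly, the conclusion is immediate and, in particular, uses none of the structural hypotheses on $z_1,z_2$ beyond the single inequality assumed in the statement.
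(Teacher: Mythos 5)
Your proof is correct and is essentially the same argument as the paper's: the paper adds $[(1/2-z_1)^2-(1/2-z_2)^2](1+x)$ to both sides of the hypothesis and observes that the $(1/2-z_1)^2$ terms cancel, which is exactly your difference-and-cancellation computation read in the other direction. No gap; nothing further needed.
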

\begin{proof}
Given that condition $1-z_1^2 < \{y-(1/2-z_1)^2\}(1+x)$ holds, we determine that
$1-z_1^2 + [(1/2-z_1)^2-(1/2-z_2)^2](1+x) < \{y-(1/2-z_1)^2\}(1+x) +[(1/2-z_1)^2-(1/2-z_2)^2](1+x) = \{y-(1/2-z_2)^2\}(1+x).$
\end{proof}

\begin{remark}\label{remark4}
We consider an implication of $T$ as follows. In a price subgame where $z_1+z_2 \le 1$ holds, Firm $2$ is located close to $C_3$ while Firm $1$ is located close to its monopoly market. Firm $1$ gains at most its monopoly profit because it is located further away from $C_3$ than Firm $2$.

On the other hand, by $z_1+z_2<1$, Firm $2$ gains greater profit than Firm $1$ by $\{(1/2-z_1)^2-(1/2-z_2)^2\}(1+x) > 0$ because Firm $2$ is located closer to $C_3$ than Firm $1$.

Moreover, by Lemma~\ref{pi2starcondition}, Firm $2$'s profit fails to achieve the maximum level it could achieve at its present location if it plays as a monopolist at this location. 
However, by Lemma~\ref{pi2starcondition2}, Firm $2$'s profit fails to achieve the maximum profit that it gains at its present location if it plays as a monopolist at this location.

This is because the competitive effect of Firm $1$'s behavior entails a rent that Firm $2$ enjoys.
\end{remark}

\section{Location Game}

Let $\Pi_i(z_1,z_2)$ denote firm $i$'s profit in a location game.
\begin{align}\label{locpayoff}
\Pi_1(z_1,z_2)=
\begin{cases}
1-z_1^2, &\text{if $z_1 \in [0, \bar{z}_1],$}\\
(y-(1/2-z_1)^2)(1+x), &\text{if $z_1 \in [\bar{z}_1,1/2], z_2 \in [\bar{z}_2,1],$}\\
1-z_1^2, &\text{if $z_1 \in [\bar{z}_1,1/2], z_2 \in [1/2,\bar{z}_2], z_1+z_2 \le 1,$}\\
1-(1-z_2)^2+T_2(1+x), &\text{if $z_1 \in [\bar{z}_1,1/2], z_2 \in [1/2,\bar{z}_2], z_1+z_2 > 1,$}
\end{cases}
\end{align}

\begin{align}\label{locpayoff2}
\Pi_2(z_1,z_2)=
\begin{cases}
1-(1-z_2)^2, &\text{if $z_2 \in [\bar{z}_2,1],$}\\
(y-(1/2-z_2)^2)(1+x), &\text{if $z_1 \in [0,\bar{z}_1], z_2 \in [1/2,\bar{z}_2],$}\\
1-(1-z_2)^2, &\text{if $z_1 \in [\bar{z}_1,1/2], z_2 \in [1/2,\bar{z}_2], z_1+z_2 \ge 1,$}\\
1-z_1^2+T_1(1+x), &\text{if $z_1 \in [\bar{z}_1,1/2], z_2 \in [1/2,\bar{z}_2], z_1+z_2 < 1.$}
\end{cases}
\end{align}
, where $T_j=(1/2-z_j)^2-(1/2-z_i)^2, (i=1,2, j \ne i)$.

In this section, we consider symmetric location equilibrium. First, we show that equilibrium pay-offs in all equilibrium location pairs are identical, that is, equal to $1$.
\begin{prop}\label{overallgamepayoff}
Equilibrium pay-offs for any subgame perfect equilibrium of this game are equal to $1$.
\end{prop}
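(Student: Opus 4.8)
The plan is to determine the equilibrium payoff by a sandwich argument: in any subgame perfect equilibrium I will show that each firm's expected payoff $V_i$ satisfies both $V_i\ge 1$ and $V_i\le 1$, which forces $V_i=1$. I work throughout from the reduced-form location payoffs \eqref{locpayoff}--\eqref{locpayoff2}, which already fold in the price-subgame values of Proposition~\ref{pricesubgameuniquneness}. Since the model is invariant under the reflection $z\mapsto 1-z$ that interchanges the two firms, it is enough to treat Firm $1$ and obtain the statement for Firm $2$ by symmetry.

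For the lower bound I would use the monopoly deviation to the endpoint. At $z_1=0$ only the cases of \eqref{locpayoff} with $z_1+z_2\le 1$ can be in force, and each of them reduces to $1-z_1^2$ (the threshold case coincides with it by the defining equation of $\bar z_1$), so $\Pi_1(0,z_2)=1$ for every $z_2\in[1/2,1]$. Hence relocating to $z_1=0$ secures exactly $1$ against any, possibly mixed, location strategy of Firm $2$, and optimality of the equilibrium strategy gives $V_1\ge 1$; by symmetry $V_2\ge 1$.

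The substance is the upper bound $V_1\le 1$, which is the Varian rent-dissipation phenomenon: the captive market yields at most the monopoly value $1$, and competition for the informed group $C_3$ must erase any net surplus from it. Reading \eqref{locpayoff}, one has $\Pi_1=1-z_1^2\le 1$ in the two configurations where Firm $1$ is the farther firm (lines one and three), so a payoff above $1$ can arise only when Firm $1$ extracts $C_3$-rent: either Firm $2$ sits in its monopoly region (line two) or Firm $1$ is strictly the closer firm, $z_1+z_2>1$ (line four, which carries the rent $T_2(1+x)>0$, cf.\ Remark~\ref{remark4}). I would assume $V_1>1$ for contradiction, so that Firm $1$'s support places positive weight, against a positive-measure set of Firm $2$'s locations, on exactly these rent-extracting configurations. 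Against such behaviour Firm $2$ has a profitable move toward the centre: sliding mass of its location distribution closer to $1/2$ turns it into the closer firm and, by the price-subgame accounting behind Lemmas~\ref{pi2starcondition} and \ref{pi2starcondition2}, lets it collect the matching rent on top of a monopoly base, strictly raising its payoff above $V_2$. This contradicts equilibrium, giving $V_1\le 1$, and symmetry gives $V_2\le 1$; combined with the lower bound, $V_1=V_2=1$.

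I expect the upper bound to be the main obstacle, and the difficulty is twofold. First, the location strategies are genuinely mixed, so the deviation comparison must be carried out in expectation over Firm $1$'s distribution while \eqref{locpayoff2} switches formula as $z_1$ crosses $\bar z_1$ and as $z_1+z_2$ crosses $1$; tracking which line applies on each part of the support is where the real bookkeeping sits. Second, turning ``Firm $1$ earns positive expected rent'' into ``Firm $2$ strictly gains by moving centreward'' requires isolating the clean inequality that makes the centre deviation profitable, and this is exactly where the feasibility restriction \eqref{misutenai} and the closeness conditions \eqref{tomonitikai} enter. Once that inequality is secured the sandwich closes and the payoff is pinned to $1$; as a constructive cross-check, one can verify that in the symmetric mixed-location equilibrium the endpoint $z_1=0$ lies in the support, so indifference across the support already forces the common value to be $\Pi_1(0,\cdot)=1$.
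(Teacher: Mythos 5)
Your lower bound coincides with the paper's: relocating to $z_1=0$ yields $\Pi_1=1$ with probability one against any strategy of Firm $2$, so $V_i\ge 1$. The genuine gap is in your upper bound, and it is not merely an unfinished computation but a mechanism that fails. You argue that $V_1>1$ would give Firm $2$ a strictly profitable centerward deviation, but (i) the contradiction is circular as framed: a deviation refutes equilibrium only if it beats $V_2$, and under your hypothesis $V_2$ is not pinned down (your lower bound only gives $V_2\ge 1$), so the inequality ``centerward payoff $>V_2$'' that you defer to \eqref{misutenai} and \eqref{tomonitikai} is never produced; and (ii) the mechanism is false for arbitrary subgame perfect equilibria. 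Take $(y-1/4)(1+x)>1$ and the profile $(z_1,z_2)=(1/2,1)$: by \eqref{eachprofitatlocation1} Firm $1$ earns $1+\tfrac{1}{4}(1+x)>1$, while against $z_1=1/2$ one has $z_1+z_2\ge 1$ for every $z_2$, so by \eqref{eachprofitatlocation2} Firm $2$ earns $1-(1-z_2)^2$, uniquely maximized at $z_2=1$ — every centerward move strictly \emph{lowers} Firm $2$'s payoff, and this profile is subgame perfect (the paper's remark following Proposition~\ref{winloseloc} records exactly these asymmetric pure equilibria $(0,1/2)$ and $(1/2,1)$ when $y(1+x)\ge 1$). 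So ``Firm $1$ earns rent'' does not imply ``Firm $2$ gains by moving centerward''; the proposition is only coherent under the section's standing restriction to \emph{symmetric} location equilibria, and your argument never invokes symmetry where it is actually needed.

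The paper's proof sidesteps the opponent's incentives entirely and uses Firm $1$'s \emph{own} support: if $\tilde z_1$ is the left edge of Firm $1$'s location support, then against the mirrored strategy of Firm $2$ one has $z_1+z_2\le 1$ with probability one at $z_1=\tilde z_1$, so the payoff there is $1-\tilde z_1^2\le 1$ by \eqref{locpayoff} (the boundary case $z_1=\bar z_1$, $z_2=\bar z_2$ agrees with this by the defining equation of $\bar z_1$); continuity of the location payoff then keeps payoffs near the edge below any putative $V_1>1$, contradicting the requirement that points of the support earn $V_1$. Your closing ``cross-check'' — that $z_1=0$ lies in the support, so indifference forces the common value $\Pi_1(0,\cdot)=1$ — is precisely this argument in embryo, but you apply it only to the specific constructed equilibrium rather than making it the proof. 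Promoted to the left-edge-of-support argument under symmetry, it closes the upper bound; the deviation route you made primary does not.
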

\begin{proof}
Firm $1$ gains a profit $\Pi_1=1$ with a probability of $1$ if it is located at $z_1=0$. It follows that equilibrium profit is not strictly less than $1$. Suppose that Firm $1$ gains a profit strictly greater than $1$. If Firm $1$ is located at $\tilde{z}_1$, which is the left edge of its location strategy, it follows from the discussion above that its profit is $1-\tilde{z}_1^2$. However, this profit is less than $1$.
Because of continuity in profit functions in all location points, Firm $1$ cannot gain an equilibrium profit greater than $1$ that is approximately $\tilde{z}_1$. Thus, Firm $1$ cannot gain a profit strictly greater than $1$ in an equilibrium.
\end{proof}

In the following, we consider the case of $y(1+x) \ge 1$. We show that a symmetric mixed strategy equilibrium exists in this case.
\paragraph{Mixed Strategy Equilibrium $1$}
We consider equilibrium location pairs when $x,y$ is such that $(y-1/4)(1+x) > 1$.
It follows by \eqref{tomonitikai} that $\hat{z}_1, \hat{z}_2$ does not exist.
Thus, we obtain pay-off functions as follows,
\begin{align}\label{eachprofitatlocation1}
\Pi_1(z_1,z_2)=
\begin{cases}
1-z_1^2, &\text{if $z_1+z_2 \le 1,$}\\
1-(1-z_2)^2+T_2(1+x), &\text{if $z_1+z_2 > 1.$}
\end{cases}
\end{align}
\begin{align}\label{eachprofitatlocation2}
\Pi_2(z_1,z_2)=
\begin{cases}
1-(1-z_2)^2, &\text{if $z_1+z_2 \ge 1,$}\\
1-z_1^2+T_1(1+x), &\text{if $z_1+z_2 < 1.$}
\end{cases}
\end{align}
The closer a firm is located to $1/2$, the more profit it gains when it is located asymmetrically with respect to the other firm.
However, both firms do obtain at most their own monopoly profits at symmetric location pairs. Thus, since $\left(\frac{1}{2}, \frac{1}{2}\right)$ is not a best response pair, it follows that an equilibrium location pair does not exist using a pure strategy. In the following, we present an equilibrium location pair using a mixed strategy.

Given Firm 2's strategy $G_2(z_2)$, the expected pay-off of Firm 1 when located at $z_1$ is given by the following:
\begin{align}\label{epi1}
E[\Pi_1]=&\int_{1/2}^{1-z_1}(1-z_1^2)g_2(z_2)dz_2\notag\\
&+\int_{1-z_1}^1\left[1-(1-z_2)^2+\{(1/2-z_2)^2-(1/2-z_1)^2\}(1+x)\right]g_2(z_2)dz_2,
\end{align}
where $g_2(z_2)$ denotes the density function of a distribution function $G_2$.

\begin{prop}\label{mixmix}
Let $x,y$ satisfy $(y-1/4)(1+x) > 1$. Then, a mixed strategy equilibrium exists such that the equilibrium strategy of each firm is
\begin{align*}
&G_1^*(z_1)=
\dfrac{2z_1}{x(1-2z_1)+1},&\quad\text{$0 \le z_1 \le 1/2,$}\\
&G_2^*(z_2)=
\dfrac{(2z_2-1)(1+x)}{x(2z_2-1)+1},&\quad\text{$1/2 \le z_2 \le 1,$}\\
\end{align*}
where $G_2^*(z_2)$ is a distribution function because $G{_2^*}'>0, G_2^*(\frac{1}{2})=0$, and $G_2^*(1)=1$. $G_1^*(z_1)$ is a distribution function for similar reasons.
\end{prop}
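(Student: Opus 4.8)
The plan is to verify that $(G_1^*,G_2^*)$ satisfies the indifference conditions characterizing a mixed-strategy Nash equilibrium of the location game, using the equilibrium payoff value $1$ already pinned down by Proposition~\ref{overallgamepayoff}. First I would record that $G_1^*$ and $G_2^*$ are genuine distribution functions with full support on $[0,1/2]$ and $[1/2,1]$ respectively (strictly increasing, with the stated endpoint values), so that the expected-payoff formula \eqref{epi1} applies with an absolutely continuous $G_2$ and there are no pure strategies lying outside the supports left to check.

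The heart of the argument is to show that $E[\Pi_1]=1$ for every $z_1\in[0,1/2]$ when Firm $2$ plays $G_2^*$. Rather than integrate \eqref{epi1} directly, I would differentiate it in $z_1$. Applying Leibniz's rule, the two boundary contributions arising from the moving limit $z_2=1-z_1$ cancel exactly: on the line $z_1+z_2=1$ one has $T_2=0$, so the two branches of $\Pi_1$ coincide there with common value $1-z_1^2$, and the payoff is continuous across the switching locus. What survives is
\begin{align*}
\frac{d}{dz_1}E[\Pi_1]=-2z_1\,G_2^*(1-z_1)+(1+x)(1-2z_1)\bigl(1-G_2^*(1-z_1)\bigr).
\end{align*}
Substituting the closed form of $G_2^*$ and simplifying $1-G_2^*(1-z_1)$ shows this vanishes identically, so $E[\Pi_1]$ is constant on $[0,1/2]$; evaluating at $z_1=0$ gives $E[\Pi_1]=G_2^*(1)-G_2^*(1/2)=1$, consistent with Proposition~\ref{overallgamepayoff}. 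Equivalently, writing $z_2=1-z_1$, the vanishing of the derivative is the pointwise linear requirement $2(1-z_2)G_2(z_2)=(1+x)(2z_2-1)\bigl(1-G_2(z_2)\bigr)$, and solving for $G_2(z_2)$ (with $G_2(1/2)=0$) reproduces exactly the stated $G_2^*$; this is the source of the particular functional form.

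Next I would invoke the reflection symmetry $z\mapsto 1-z$, which exchanges the two firms and their location intervals. A direct check shows that the reflection of the law $G_1^*$ is precisely $G_2^*$, so Firm $2$'s problem against $G_1^*$ is the mirror image of Firm $1$'s problem against $G_2^*$; hence $E[\Pi_2]=1$ for every $z_2\in[1/2,1]$ without repeating the computation. Since each firm's expected payoff is constant, equal to $1$, over the whole of its own strategy space, every pure location is a best response and no profitable deviation exists, so $(G_1^*,G_2^*)$ is a Nash equilibrium of the location game; combined with the price-subgame equilibria of the previous section, this yields the asserted subgame perfect equilibrium.

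The main obstacle I anticipate is the derivative computation: verifying that the Leibniz boundary terms cancel, which rests on the continuity of $\Pi_1$ across $z_1+z_2=1$, and then carrying out the algebra that forces $\frac{d}{dz_1}E[\Pi_1]$ to vanish after inserting $G_2^*$. Everything else, namely the distribution-function checks, the value of the constant, and the symmetric half for Firm $2$, is routine once this indifference identity is established.
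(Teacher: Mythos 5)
Your proof is correct, and it verifies the same two facts as the paper --- constancy of $E[\Pi_1]$ at the value $1$ on all of $[0,1/2]$, then symmetry for Firm $2$ --- but by a dual computation. The paper substitutes $G_2^*$ into \eqref{epi1} and integrates by parts, using the algebraic identity $\bigl[2(1-z_2)+(1+x)(2z_2-1)\bigr]G_2^*(z_2)=(1+x)(2z_2-1)$ to collapse the remaining integral and obtain $E[\Pi_1^*]=1$ in closed form. You instead differentiate \eqref{epi1} in $z_1$, observe that the Leibniz boundary terms at $z_2=1-z_1$ cancel because $T_2=0$ there (both branches of the payoff equal $1-z_1^2$ on the switching locus), and are left with $-2z_1G_2^*(1-z_1)+(1+x)(1-2z_1)\bigl(1-G_2^*(1-z_1)\bigr)\equiv 0$, anchoring the constant at $z_1=0$; note that your surviving first-order condition is exactly the identity the paper exploits under the integral sign, so the core algebra coincides. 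What your route buys is, first, an explanation of where the functional form comes from --- solving $2(1-z_2)G_2=(1+x)(2z_2-1)(1-G_2)$ with $G_2(1/2)=0$ uniquely reproduces the stated $G_2^*$, which the paper presents without derivation --- and, second, a more explicit symmetry step: you verify that the reflection $z\mapsto 1-z$ carries the law $G_1^*$ to $G_2^*$ (indeed $1-G_1^*(1-z_2)=G_2^*(z_2)$), where the paper simply asserts symmetry. What the paper's direct integration buys is self-containedness: integration by parts handles the junction at $z_1+z_2=1$ automatically, with no need for the Leibniz-rule and boundary-continuity discussion. Both arguments correctly rely on the standing restriction $z_1\in[0,1/2]$, so constancy of the expected payoff on the full strategy space rules out all deviations, and on the hypothesis $(y-1/4)(1+x)>1$, which guarantees that \eqref{eachprofitatlocation1}--\eqref{eachprofitatlocation2}, and hence \eqref{epi1}, describe the payoffs everywhere.
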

\begin{proof}
We show that for any given $G_2^*(z_2)$, Firm $1$ gains an expected profit equal to $1$ when it is located at $z_1 \in [0,1/2]$.
Substituting $G_2^*$ into \eqref{epi1}, by partial integration for the second term of this equation, we obtain
\begin{align*}
&E[\Pi_1^*]=(1-z_1^2)[G_2^*(z_2)]^{1-z_1}_{1/2}\notag\\
&+\left[[1-(1-z_2)^2+\{(1/2-z_2)^2-(1/2-z_1)^2\}(1+x)]G_2^*(z_2)\right]^1_{1-z_1}\notag\\
&-\int^{1}_{1-z_1}[1-(1-z_2)^2+\{(1/2-z_2)^2-(1/2-z_1)^2\}(1+x)]^{\prime}G_2^*(z_2)dz_2\notag\\
&=1+{(1/4-(1/2-z_1)^2)(1+x)}-(1+x)\int^{1}_{1-z_1}(2z_2-1)dz_2\notag\\
&=1+{(1/4-(1/2-z_1)^2)(1+x)}-(1+x)\left[\left(z_2-\frac{1}{2}\right)^2\right]^{1}_{1-z_1}=1.
\end{align*}
By symmetry, we obtain the same result for Firm $2$ in the same way.
\end{proof}

\paragraph{Mixed Strategy Equilibrium $2$}
Next, we consider the case where $x,y$ lies in the region where both $y(1+x) \ge 1$ and $(y-1/4)(1+x) \le 1$ hold.
This region is characterized by $\bar{z}$. \eqref{locpayoff}-\eqref{locpayoff2} denotes the respective pay-offs for Firms 1 and 2. In this case, we construct an equilibrium using $G^*$ obtained through Proposition~\ref{mixmix}.
Now, we consider $\hat{z}_2$, which satisfies
\begin{align}\label{hatz2def}
(1-(1-\hat{z}_2)^2)G_2^*(\hat{z}_2)+\{1-G_2^*(\hat{z}_2)\}\left\{y-(\hat{z}_2-1/2)^2\right\}(1+x)=1
\end{align}

Solving (6), we obtain the following:
\begin{align*}
\hat{z}_2=\frac{1}{2}+2\left(y-\frac{1}{1+x}\right).
\end{align*}
Now, from $y(1+x) \ge 1$, it follows that $y-\frac{1}{1+x} \ge 0$ and, then, it follows that $1/2 \le \hat{z}_2$.

Next, we show $\hat{z}_2 \le \bar{z}_2$. From the definition of $\bar{z}_2$, we obtain $(y-(1/2-\bar{z}_2)^2)(1+x)=1-(1-\bar{z}_2)^2$.
The left-hand side of this equation is monotone decreasing with respect to $\bar{z}_2$ while the right-hand side is monotone increasing.
Thus, in the following, we show that $(y-(1/2-\hat{z}_2)^2)(1+x) \ge 1-(1-\hat{z}_2)^2$.

By $(y-(\hat{z}_2-1/2)^2)(1+x)=1+\frac{1}{2}(1-\hat{z}_2)(2\hat{z}_2-1)(1+x)$, we determine that $\frac{1}{2}(1-\hat{z}_2)(2\hat{z}_2-1)(1+x) > -(1-\hat{z}_2)^2$ and, thus, $\hat{z}_2 \le \bar{z}_2$.

In the same way, by symmetry, we show that $\hat{z}_1=\frac{1}{2}-2\left(y-\frac{1}{1+x}\right).$

Using $\hat{z}_2$, we show that a symmetric equilibrium exists such that Firm $2$ plays a mixed strategy, which is the same as $G_2^*$ on the support $[1/2,\hat{z}_2]$ while the remainder of the probability, $1-G_2^*$, is attached to being located at $z_2=1$.

Given this strategy of Firm $2$, $E[\Pi_1]$ denotes the expected profit of Firm $1$ for all points $z_1$ on the interval $[1-\hat{z}_2,1/2]$. We conclude that
\begin{align}\label{firm1EXPECTEDPAYOFF}
E[\Pi_1]=&\int^{1-z_1}_{1/2}(1-z_1^2)g_2(z_2)dz_2\notag\\
&+\int^{\hat{z}_2}_{1-z_1}[1-(1-z_2)^2+\left\{(1/2-z_2)^2-(1/2-z_1)^2\right\}(1+x)]g_2(z_2)dz_2\notag\\
&+\left\{1-G_2(\hat{z}_2)\right\}\left\{y-(1/2-z_1)^2\right\}(1+x),
\end{align}
where $g_2(z_2)$ denotes the density function of the distribution function $G_2$.

\begin{prop}\label{mixmixalpha1}
Let $x,y$ satisfy both $y(1+x) \ge 1$ and $(y-1/4)(1+x) \le 1$. Then, an equilibrium exists such that each firm plays.
\begin{align*}
&G_1^{**}(z_1)=
\begin{cases}
0,&\quad\text{if $z_1 < 0,$}\\
\dfrac{2\hat{z}_1}{x(1-2\hat{z}_1)+1},&\quad\text{if $0 \le z_1 \le \hat{z}_1,$}\\
\dfrac{2z_1}{x(1-2z_1)+1},&\quad\text{if $\hat{z}_1 \le z_1 \le 1/2,$}\\
\end{cases}\\
&G_2^{**}(z_2)=
\begin{cases}
\dfrac{(2z_2-1)(1+x)}{x(2z_2-1)+1},&\quad\text{if $1/2 \le z_2 \le \hat{z}_2,$}\\
\dfrac{(2\hat{z}_2-1)(1+x)}{x(2\hat{z}_2-1)+1},&\quad\text{if $\hat{z}_2 \le z_2 < 1,$}\\
1,&\quad\text{if $1 \le z_2.$}
\end{cases}
\end{align*}
\end{prop}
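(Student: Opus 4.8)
The plan is to show that the pair $(G_1^{**},G_2^{**})$ is a Nash equilibrium of the location game; combined with the price-subgame equilibria of Section~3 this yields the claimed subgame perfect equilibrium. Because the model is symmetric about $1/2$, I would verify only that firm $1$ has no profitable deviation when firm $2$ plays $G_2^{**}$, the argument for firm $2$ being the mirror image under $z\mapsto 1-z$. By construction $G_1^{**}$ puts an atom of mass $G_1^*(\hat z_1)$ at $z_1=0$ and spreads the remaining mass with the continuous density of $G_1^*$ over $[\hat z_1,1/2]$, so its support is $\{0\}\cup[\hat z_1,1/2]$ with a gap on $(0,\hat z_1)$. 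Three facts then need to be established: $E[\Pi_1]=1$ for every $z_1\in[\hat z_1,1/2]$, $E[\Pi_1]=1$ at the atom $z_1=0$, and $E[\Pi_1]\le 1$ throughout the gap $(0,\hat z_1)$.

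For the continuous part I would evaluate \eqref{firm1EXPECTEDPAYOFF} by the same integration by parts as in the proof of Proposition~\ref{mixmix}, the only differences being that the continuous integral now stops at $\hat z_2$ and that the atom term $\{1-G_2^*(\hat z_2)\}\{y-(1/2-z_1)^2\}(1+x)$ is present. The boundary contributions at $z_2=1-z_1$ cancel because firm $1$'s profit is continuous there (the two adjacent branches of \eqref{locpayoff} both give $1-z_1^2$), and the interior integral collapses through the identity that the derivative of the integrand multiplied by $G_2^*$ equals $(1+x)(2z_2-1)$, the denominator of $G_2^*$ cancelling exactly as in Proposition~\ref{mixmix}. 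What remains is the boundary term at $\hat z_2$ together with the atom term; after the $z_1$-dependent pieces cancel, their sum is exactly the left-hand side of \eqref{hatz2def}, which by definition equals $1$. Thus \eqref{hatz2def} is precisely the condition that makes $E[\Pi_1]\equiv 1$ on $[\hat z_1,1/2]$.

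The atom is trivial: at $z_1=0$ we have $z_1\in[0,\bar z_1]$, so by the first line of \eqref{locpayoff} firm $1$ earns $1-z_1^2=1$ for every $z_2$, giving $E[\Pi_1]=1$ regardless of $G_2^{**}$. The gap is the substantive case. On $(0,\bar z_1)$ the profit is again $1-z_1^2<1$. On $[\bar z_1,\hat z_1)$ the bound $z_1<\hat z_1=1-\hat z_2$ forces $z_1+z_2<1$ for every $z_2$ in firm $2$'s continuous support $[1/2,\hat z_2]$ (using $\hat z_2\le\bar z_2$), so firm $1$ earns $1-z_1^2$ there and the capped profit $(y-(1/2-z_1)^2)(1+x)$ at the atom $z_2=1$. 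Hence $E[\Pi_1](z_1)=G_2^*(\hat z_2)(1-z_1^2)+\{1-G_2^*(\hat z_2)\}(y-(1/2-z_1)^2)(1+x)$, a concave quadratic in $z_1$. I would check that its left-derivative at $\hat z_1$ is zero — smooth pasting onto the constant value $1$ on $[\hat z_1,1/2]$ — so by concavity it is increasing on $[\bar z_1,\hat z_1]$ and stays strictly below its value at $\hat z_1$, which \eqref{hatz2def} fixes at $1$.

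Putting these together, $\max_{z_1\in[0,1/2]}E[\Pi_1]=1$, attained exactly on the support of $G_1^{**}$, so $G_1^{**}$ is a best reply to $G_2^{**}$; symmetry gives the same for firm $2$, and the common equilibrium payoff $1$ agrees with Proposition~\ref{overallgamepayoff}. I expect the main obstacle to be the gap argument on $[\bar z_1,\hat z_1)$: one must recognise that below $\hat z_1$ the continuous support contributes only the monopoly value $1-z_1^2$ while all competitive rent is concentrated in the atom at $z_2=1$, and then verify the smooth-pasting and concavity that rule out a profitable inward deviation. The indifference computation, by contrast, is essentially a rerun of Proposition~\ref{mixmix} with \eqref{hatz2def} absorbing the atom.
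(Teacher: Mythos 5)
Your proposal is correct and takes essentially the same route as the paper's Appendix~C proof: indifference on $\{0\}\cup[\hat{z}_1,1/2]$ via the identical integration by parts with \eqref{hatz2def} absorbing the atom at $z_2=1$, and the gap ruled out by splitting at $\bar{z}_1$ and analysing the concave payoff $f(z_1)=(1-z_1^2)G_2^{**}(\hat{z}_2)+\{1-G_2^{**}(\hat{z}_2)\}\{y-(z_1-1/2)^2\}(1+x)$ on $(\bar{z}_1,\hat{z}_1)$. The only cosmetic difference is that you describe the boundary condition as smooth pasting, while the paper verifies it directly by computing $f'(\hat{z}_1)=0$ and $f''<0$.
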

\begin{proof}
See Appendix C.
\end{proof}

\paragraph{Pure Strategy Equilibrium}
When $y$ is relatively small, $C_3$ consumers do not want to purchase any products because they do not feel sufficiently attracted to the products.
This is the reason firms sell products that have a distinctive characteristic recognized by loyal consumers.
This situation can occur when $C_3$ is a large market but the reservation value $y$ is sufficiently small.
Thus, it appears that $C_3$ does not exist.
Finally, we characterize this type of subgame as a perfect equilibrium.

\begin{prop}\label{winloseloc}
When $x,y$ satisfies $1 > y(1+x)$, we have a subgame perfect equilibrium such that
\begin{align*}
z_1^*=0, z_2^*=1.
\end{align*}
\end{prop}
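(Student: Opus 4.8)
The plan is to verify directly, by backward induction, that the profile in which Firm $1$ locates at $z_1=0$ and Firm $2$ at $z_2=1$ is a pair of mutual best responses in the location game, working with the reduced-form location pay-offs \eqref{locpayoff}--\eqref{locpayoff2}, which already incorporate the price-subgame equilibrium. First I would record the on-path values. Since $0 \in [0,\bar z_1]$ and $1 \in [\bar z_2,1]$, the first lines of \eqref{locpayoff} and \eqref{locpayoff2} give $\Pi_1(0,1)=1-0^2=1$ and $\Pi_2(0,1)=1-(1-1)^2=1$, consistent with Proposition~\ref{overallgamepayoff}. Intuitively, at these locations each firm sits at distance $1/2$ from $C_3$ and simply charges its monopoly price $1$ to its own fan group while ignoring the switchers; this is the correct price-subgame behavior because capturing $C_3$ would require $p_i \le y-1/4$ and hence yield at most $(y-1/4)(1+x) \le y(1+x) < 1$ under the hypothesis $y(1+x)<1$.

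Next I would rule out unilateral deviations. Fix Firm $2$ at $z_2=1$ and let Firm $1$ relocate to an arbitrary $z_1 \in [0,1/2]$. Because $z_2=1 \ge \bar z_2$, only the first two lines of \eqref{locpayoff} can be active, so $\Pi_1(z_1,1)=1-z_1^2$ for $z_1 \in [0,\bar z_1]$ and $\Pi_1(z_1,1)=(y-(1/2-z_1)^2)(1+x)$ for $z_1 \in [\bar z_1,1/2]$. The first expression is at most $1$, with equality only at $z_1=0$; the second is increasing in $z_1$ (its subtracted transport term $(1/2-z_1)^2$ shrinks as $z_1 \to 1/2$) and is therefore bounded above by its value at $z_1=1/2$, namely $y(1+x)$, which is strictly below $1$ by hypothesis. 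Hence $\Pi_1(z_1,1) \le 1 = \Pi_1(0,1)$ for every admissible $z_1$, so $z_1=0$ is a best response to $z_2=1$. By the symmetry of the model about $1/2$, the identical computation with the roles of the firms exchanged shows $z_2=1$ is a best response to $z_1=0$, which delivers the subgame perfect equilibrium.

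Two points require care before the two inequalities above may be invoked. The first is the status of the thresholds $\bar z_1,\bar z_2$ in this parameter region: I would check that $\bar z_2$, defined by $(y-(1/2-\bar z_2)^2)(1+x)=1-(1-\bar z_2)^2$, satisfies $\bar z_2 \le 1$, degenerating to $\bar z_2=1/2$ when $y(1+x)<3/4$ (since then the left-hand side already lies below the right-hand side at $z_2=1/2$, and the two sides move monotonically apart thereafter), so that $z_2=1$ genuinely lies in the monopoly region $[\bar z_2,1]$; by the symmetry $\bar z_1=1-\bar z_2$ the same holds for $z_1=0 \in [0,\bar z_1]$. The second, and the main obstacle, is confirming that \eqref{locpayoff} really reports the price-subgame equilibrium profit at the off-path pairs $(z_1,1)$ with $z_1 \in [\bar z_1,1/2]$: here Firm $2$, pinned at the far end, does not seriously contest $C_3$, so Firm $1$'s equilibrium price is its monopolist-over-both price $y-(1/2-z_1)^2$ (well-defined precisely because $z_1 \ge \bar z_1$ makes reaching $C_3$ weakly dominate serving only $C_1$), giving the claimed value $(y-(1/2-z_1)^2)(1+x)$. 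Once this identification is secured, both bounds are immediate and the hypothesis $y(1+x)<1$ does all the remaining work.
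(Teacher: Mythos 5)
Your proposal is correct and follows essentially the same route as the paper: given the opponent fixed at the far end, a deviation by Firm $1$ to $z_1\in(0,1/2]$ yields $\max\{1-z_1^2,\,(y-(1/2-z_1)^2)(1+x)\}<1$ under $y(1+x)<1$, and symmetry handles Firm $2$. Your extra care about the thresholds $\bar z_1,\bar z_2$ and about identifying the price-subgame continuation profits (via the analogues of Propositions~\ref{pure1} and~\ref{pure3}) merely makes explicit what the paper's terser proof leaves implicit.
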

\begin{proof}
Firm $1$ gains a profit $\max\{1-z_1^2,(y-(1/2-z_1)^2)(1+x)\}$ when it is located at $z_1$.
Given $z_1^*=0,z_2^*=1$, suppose that Firm $1$ deviates to $z_1 \in (0,1/2]$.
Because of $1 > y(1+x)$, Firm $1$ gains a profit strictly less than $1$.
Therefore, Firm $1$ never deviates. Because of symmetry, we show this is the same for Firm $2$.
Thus, $z_1^*=0, z_2^*=1$ is a subgame perfect location pair.
\end{proof}

\begin{remark}
If $y(1+x) \ge 1$ is satisfied, an asymmetric pure strategy equilibrium exists such that $(z_1^*,z_2^*)=\left(0,\frac{1}{2}\right) \text{or} \left(\frac{1}{2},1\right)$. This corresponds to the region mentioned in Proposition~\ref{mixmix} and \ref{mixmixalpha1}. If $(y-1/4)(1+x) \le 1$ is satisfied, firms behave in the same manner as they would in a monopoly. That is, a firm located at $0$ or $1$ maximizes its own monopoly profit while a firm located at $1/2$ maximizes the profit that it gains from both its own fans and $C_3$ consumers. Thus, the firms do not deviate. If $(y-1/4)(1+x) > 1$ is satisfied, because a firm located at either $0$ or $1$ is always far away from $1/2$ whichever way it moves, it cannot gain a strictly higher profit than monopoly profit. Thus, the firm does not deviate. On the other hand, given the strategy of a firm that specializes in its own fans, a firm located at $1/2$ loses its rent, mentioned in Remark~\ref{remark4}, when it moves from $1/2$ to another location. Thus, this firm does not deviate.
\end{remark}

\section{Conclusions and Remarks}
We consider a model that incorporates a spatial competition framework \`a la Hotelling (1929) where firms compete in price and location in Varian's model of sales.

We show that the difference between each firm's choice of location in a subgame perfect equilibrium will be neither purely maximized at both ends of a line nor purely minimized at the center. In our model, these two typical results stochastically occur in a subgame perfect equilibrium.

Here, we consider why firms' choice of locations is randomized. Discontinuous games are analyzed by a mixed strategy equilibrium. However, in our model, a pay-off function for location points $z$ is continuous at the point where both firms' products are of equal value for the informed $C_3$. Thus, we find that ``dispersion'' is not caused by discontinuity in profit functions with regard to $z$.

On the other hand, Varian's model of sales is also characterized by a pay-off discontinuity, which is caused by discontinuous demands. In our model, each firm has its own fans, that is, the firms shield a part of their own markets while engaging in a Bertrand competition over the informed who are located at 1/2 in the interval. At the same time, in this market, firms cannot set two-part prices, one for their fans and one to attract the informed switchers. This structure is the same as the relationship between informed and uninformed consumers in Varian's model. This is the reason a pay-off discontinuity occurs in the profits arising in the pricing subgame of our model. This discontinuity in profits causes price dispersion in a price subgame, just as sales arise in Varian's model. This is the reason why Varian's similar result holds even when firm location is explicitly included.

However, for the first-stage game truncated by backward induction, a pay-off function for characteristic Z is continuous at the point where both firms' products are of equal value to the switchers. Therefore, the occurrence of randomization in a choice of product characteristics is not caused by discontinuity in profit functions for the choice of characteristics. Thus we find that the existence of ``fans'' is key to the cause of ``dispersion'' of product differentiation. This result is caused by the explicit introduction of choice in characteristics into Varian's model of sales.

Moreover, unlike Varian's model, we show that by explicitly introducing location choice, it is possible for a firm to gain excess profit in a price competitive market. However, from Proposition 4, firms' expected profit in a subgame perfect equilibrium is set as the maximum monopoly profit, equal to one. Thus, even when product differentiation is explicitly introduced into a Varian-type model, Varian's implication can be retained; the opportunity for profit in an informed market is lost with competition.

\appendix
\section{Pure strategy equilibriums}
In the following, we consider pure strategy equilibriums. In each case, we show that an equilibrium profit in a subgame is unique.

First, these cases are classified according to whether $z_1$ is close to $1/2$ or far from $1/2$. When $z_1$ is far from $1/2$, Firm $1$ gains a higher profit when it sells a product to $C_1$ only compared to when it sells to both $C_1$ and $C_3$. We consider the case of $1-z_1^2 \ge \{y-(1/2-z_1)^2\}(1+x)$. When $z_1$ is close to $1/2$, the definition is inverted.

Now, we define $\bar{z}_1$, which is given by evaluating $(y-(1/2-z_1)^2)(1+x)=1-z_1^2$ for $z_1$. By $\bar{z}_1 \in [0,1/2]$,
\begin{align*}
\bar{z}_1 = \frac{1}{2}-\frac{\sqrt{(x-1)^2+4x(y-1/4)(1+x)}-1}{2x}.
\end{align*}
\begin{remark}\label{barznonexist}
For all $z_1$, $(y-(1/2-z_1)^2)(1+x) > 1-z_1^2$ if and only if $(y-1/4)(1+x) > 1$. Therefore, $\bar{z}_1$ does not exist if $(y-1/4)(1+x) > 1$.
Now, we assume that $x \ge 1$. Moreover, it follows that if $\bar{z}_1$ exists, then, $y \le 3/4$.
\end{remark}
Because of symmetry, we define $\bar{z}_2$ in the same way.
\begin{align*}
\bar{z}_2 = \frac{1}{2}+\frac{\sqrt{(x-1)^2+4x(y-1/4)(1+x)}-1}{2x}.
\end{align*}

Here, we fix case $z_1 \in [0,\bar{z}_1]$ where $z_1$ is far from $1/2$. We classify the following two cases according to the distance between $z_2$ and $1/2$. At first, $z_2$ is also far from $1/2$, that is, $z_2 \in [\bar{z}_2,1]$. Thus, $z_2$ satisfies $1-(1-z_2)^2 \ge \{y-(1/2-z_2)^2\}(1+x)$ (Proposition~\ref{pure1}). Next, only $z_2$ is close to $1/2$, that is, $z_2 \in [1/2,\bar{z}_2)$. Therefore, $z_2$ satisfies $1-(1-z_2)^2 < \{y-(1/2-z_2)^2\}(1+x)$ (Proposition~\ref{pure3}).

When for all $z_1 \in (\bar{z}_1,1/2]$, $1-z_1^2 < \{y-(1/2-z_1)^2\}(1+x)$ is satisfied, $z_1$ is close to $1/2$. Given the condition $z_1+z_2 \le 1$ mentioned above, we focus on the case where $z_2$ is also close to $1/2$, that is, $z_2 \in [1/2,\bar{z}_2)$, which follows from $1-(1-z_2)^2 < \{y-(1/2-z_2)^2\}(1+x)$. We have already considered this case. (See also Proposition~\ref{pricesubgameuniquneness} and Appendix B.)

\begin{prop}\label{pure1}
If
\begin{align}
1-z_1^2 &\ge \{y-(1/2-z_1)^2\}(1+x),\label{hikikomorisubgame1}\\
1-(1-z_2)^2 &\ge \{y-(1/2-z_2)^2\}(1+x)\label{hikikomorisubgame2}
\end{align}
holds, then the equilibrium profit vector is $(1-z_1^2, 1-(1-z_2)^2)$, and its profit is unique.
\end{prop}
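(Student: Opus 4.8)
The plan is to show that in this regime each firm's monopoly profit is simultaneously a pointwise upper bound on its profit \emph{for every} price it could charge (independently of the rival's price) and a payoff it can unilaterally guarantee; uniqueness of the equilibrium profit vector and existence of an equilibrium attaining it then both follow at once.

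First I would establish the pointwise upper bound $\pi_1(p_1,p_2) \le 1-z_1^2$ for all $p_1$ and all $p_2$. I split on the switcher reservation threshold $\tau_1 := y-(1/2-z_1)^2$. If $p_1 > \tau_1$, then $C_3$'s participation constraint with Firm $1$ fails, so by \eqref{pi1m3} we have $\pi_1^{C_3}=0$, whence $\pi_1 = \pi_1^{C_1} \le 1-z_1^2$ straight from \eqref{pi1m1}. If instead $p_1 \le \tau_1$, then \eqref{hikikomorisubgame1} forces $\tau_1 \le (1-z_1^2)/(1+x) < 1-z_1^2$, so Firm $1$ retains its entire fan mass; bounding its switcher sales by the full mass $x$ gives $\pi_1 \le p_1(1+x) \le \tau_1(1+x) \le 1-z_1^2$, where the final inequality is exactly hypothesis \eqref{hikikomorisubgame1}. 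This second case is where the hypothesis does its work, and it is the only step requiring any care.

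Next I would record the matching lower bound: charging $p_1 = 1-z_1^2$ secures the full fan demand and, since $1-z_1^2 > \tau_1$, attracts no switchers, so $\pi_1 = 1-z_1^2$ regardless of Firm $2$'s price. Combining the two bounds, $1-z_1^2$ is Firm $1$'s maximal attainable profit, attained against any $p_2$; hence in \emph{any} pure or mixed equilibrium Firm $1$'s expected profit equals $1-z_1^2$. The symmetric argument under \eqref{hikikomorisubgame2} yields Firm $2$'s profit $1-(1-z_2)^2$, which establishes uniqueness of the equilibrium profit vector.

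Finally, existence is immediate from the same bounds: at $(p_1,p_2)=(1-z_1^2,\,1-(1-z_2)^2)$ both switcher reservation constraints fail (a one-line check from the two hypotheses), so $C_3$ buys from neither firm and each earns precisely its monopoly profit, which by the pointwise bound is its best attainable payoff against any rival price. Thus each price is a best response and the pair is a price-subgame equilibrium. The main obstacle, such as it is, is confined to the $p_1 \le \tau_1$ case above: one must verify that undercutting to capture $C_3$ cannot beat $1-z_1^2$ (precisely \eqref{hikikomorisubgame1}) and that the tie-breaking rule in \eqref{pi1m3} causes no trouble, which it does not, since the half-half split only lowers $\pi_1^{C_3}$ and so leaves the upper bound intact.
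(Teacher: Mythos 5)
Your proof is correct and takes essentially the same route as the paper's: conditions \eqref{hikikomorisubgame1}--\eqref{hikikomorisubgame2} make each firm's fan-only monopoly profit a pointwise upper bound over all price pairs (the paper's one-line claim $\max_{p_1,p_2}\pi_1(p_1,p_2)=1-z_1^2$), each firm can guarantee that payoff by charging its monopoly price against any rival price, and the pair $(1-z_1^2,\,1-(1-z_2)^2)$ attains both maxima simultaneously. Your explicit case split on the switcher threshold $\tau_1=y-(1/2-z_1)^2$ and the tie-breaking check simply spell out details the paper leaves implicit.
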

\begin{proof}
Firm $1$'s maximized profit gained from $C_1$ only is higher than the maximum profit gained from both $C_1$ and $C_3$.

Thus, we obtain $\max_{p_1,p_2}\pi_1(p_1,p_2)=1-z_1^2$. Therefore, in an equilibrium, Firm $1$ never obtains a strictly higher profit than $1-z_1^2$.

Next, Firm $1$ obtains $1-z_1^2$ whenever it charges $p_1=1-z_1^2$ for any $p_2$. Thus, in an equilibrium, Firm $1$ never obtains strictly less than $1-z_1^2$. Similarly, Firm $2$ never obtains a profit other than $1-(1-z_2)^2$ in an equilibrium. Here, price profile $(1-z_1^2, 1-(1-z_2)^2)$ enables each firm to obtain maximum profit. Thus, this profile is a unique equilibrium.
\end{proof}

\begin{remark}
Proposition~\ref{pure1} shows that an equilibrium profit is unique in a price subgame. However, there may be more than one equilibrium, that is, an equilibrium strategy that achieves an equilibrium profit is not always unique. For example, when
\begin{align}
1-z_1^2 = \{y-(1/2-z_1)^2\}(1+x),\label{hikikomorisubgameremark1}\\
1-(1-z_2)^2 > \{y-(1/2-z_2)^2\}(1+x) \label{hikikomorisubgameremark2}
\end{align}
hold. It follows by \eqref{hikikomorisubgameremark2} that a price vector $1-(1-z_2)^2$ is a dominant strategy of Firm $2$. On the other hand, if Firm $2$ chooses $1-(1-z_2)^2$, it follows by \eqref{hikikomorisubgameremark1} that Firm $1$ is indifferent to both $1-z_1^2$ and $y-(1/2-z_1)^2$. Thus, we find that all the strategy pairs are equilibria where Firm $1$ uses a mixed strategy that combines $1-z_1^2$ with $y-(1/2-z_1)^2$ at any ratio it likes if Firm $2$ chooses $1-(1-z_2)^2$.
\end{remark}

\begin{prop}\label{pure3}
If
\begin{align}
1-z_1^2 &\ge \{y-(1/2-z_1)^2\}(1+x),\label{pure3p1}\\
1-(1-z_2)^2 &< \{y-(1/2-z_2)^2\}(1+x) \label{pure3p2}
\end{align}
hold. Then, the equilibrium profit is unique and this equilibrium profit is $(1-z_1^2, \{y-(1/2-z_2)^2\}(1+x))$
\end{prop}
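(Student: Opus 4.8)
The plan is to mirror the two-sided structure used for Proposition~\ref{pure1}, treating Firm~1 (far from the centre) exactly as there, and then giving a separate argument for Firm~2 (close to the centre). For Firm~1, condition~\eqref{pure3p1} together with $x\ge 1$ first forces the strict inequality $y-(1/2-z_1)^2 < 1-z_1^2$: otherwise $\{y-(1/2-z_1)^2\}(1+x)\ge (1-z_1^2)(1+x) > 1-z_1^2$, contradicting \eqref{pure3p1} (here $1-z_1^2\ge 3/4>0$ since $z_1\le 1/2$). Consequently the most Firm~1 can ever earn by capturing $C_3$ is $p_1(1+x)\le\{y-(1/2-z_1)^2\}(1+x)\le 1-z_1^2$, while serving only $C_1$ yields at most $1-z_1^2$; hence $\max_{p_1,p_2}\pi_1=1-z_1^2$. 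Since charging $p_1=1-z_1^2$ guarantees exactly $1-z_1^2$ against any $p_2$, Firm~1's equilibrium profit is exactly $1-z_1^2$, as in Proposition~\ref{pure1}.

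The second ingredient I would establish is $y-(1/2-z_2)^2\le 1-(1-z_2)^2$, i.e.\ that at its $C_3$-reservation price Firm~2 still retains its own fan $C_2$. This follows from $y\le 3/4$: because \eqref{pure3p1} holds for the given $z_1$, Remark~\ref{barznonexist} rules out $(y-1/4)(1+x)>1$, so $\bar z_1$ exists and therefore $y\le 3/4$; combining this with $z_2\ge 1/2$ and the identity $1-(1-z_2)^2-\{y-(1/2-z_2)^2\}=z_2+1/4-y\ge 0$ gives the claim.

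With these two facts in hand I would pin down Firm~2's profit. For the upper bound, the identity above shows that when Firm~2 serves both groups the binding cap is $p_2=y-(1/2-z_2)^2$, giving $\{y-(1/2-z_2)^2\}(1+x)$, whereas serving $C_2$ alone yields only $1-(1-z_2)^2$, which is smaller by \eqref{pure3p2}; hence $\max_{p_1,p_2}\pi_2=\{y-(1/2-z_2)^2\}(1+x)$ and no equilibrium can give Firm~2 more. For achievability, the strict inequality $y-(1/2-z_1)^2<1-z_1^2$ forces Firm~1 to set exactly $p_1=1-z_1^2$ in any equilibrium, since any lower price, or any price at which it could win $C_3$, yields strictly less than $1-z_1^2$. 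At $p_1=1-z_1^2$ the net utility of $C_3$ from Firm~1 is $[y-(1/2-z_1)^2]-(1-z_1^2)<0$ by the same inequality, so $C_3$ never buys from Firm~1; Firm~2 may then charge $p_2=y-(1/2-z_2)^2$, at which $C_3$ still buys (net utility exactly $0$, admissible under the weak inequality in \eqref{pi1m3}) and $C_2$ also buys by the retained-fan inequality, delivering $\{y-(1/2-z_2)^2\}(1+x)$. As this is simultaneously Firm~2's best response, the profile is the unique equilibrium with profit vector $(1-z_1^2,\{y-(1/2-z_2)^2\}(1+x))$.

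The main obstacle is the achievability half for Firm~2: I must argue that in \emph{every} equilibrium, not merely in a constructed candidate, Firm~1 genuinely cedes $C_3$. The leverage is that Firm~1's equilibrium profit already equals its global maximum $1-z_1^2$, which, via the strict inequality $y-(1/2-z_1)^2<1-z_1^2$, is incompatible with Firm~1 ever selling to $C_3$; the remaining care is purely boundary bookkeeping, namely confirming that $C_3$ does purchase at net utility exactly zero, for which I rely on the weak inequality $p_i\le y-(1/2-z_i)^2$ in the demand specification~\eqref{pi1m3}.
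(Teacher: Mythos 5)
Your argument matches the paper's in its first half: pinning Firm~1's equilibrium profit at $1-z_1^2$ via \eqref{pure3p1}, invoking Remark~\ref{barznonexist} to get $y \le 3/4$ and hence the retained-fan inequality $y-(1/2-z_2)^2 \le 1-(1-z_2)^2$ (the identity $1-(1-z_2)^2-\{y-(1/2-z_2)^2\}=z_2+1/4-y$ is correct), and verifying the candidate profile $(1-z_1^2,\,y-(1/2-z_2)^2)$ is an equilibrium. The gap is in the uniqueness half, and it sits exactly where hypothesis \eqref{pure3p1} holds with \emph{equality}. Your strict inequality $y-(1/2-z_1)^2<1-z_1^2$ is a statement about price levels, not about profits: it is perfectly consistent with $\{y-(1/2-z_1)^2\}(1+x)=1-z_1^2$. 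In that boundary case, Firm~1 charging $p_1=y-(1/2-z_1)^2$ and capturing all of $C_3$ earns exactly $1-z_1^2$, so your key claim that ``any price at which it could win $C_3$ yields strictly less than $1-z_1^2$'' is false, and it does not follow that Firm~1 charges $1-z_1^2$ with probability one in every equilibrium. Equality in the chain $p_1(1+\rho x)\le \{y-(1/2-z_1)^2\}(1+x)\le 1-z_1^2$ is attainable precisely at $p_1=y-(1/2-z_1)^2$, $\rho=1$, so candidate equilibria in which Firm~1 mixes between the two prices are not excluded by your argument.

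This boundary case is where the paper's proof does its real work. Having observed that Firm~1's equilibrium support can only contain $1-z_1^2$ and $y-(1/2-z_1)^2$, the paper supposes Firm~1 places probability $\eta>0$ on $y-(1/2-z_1)^2$ and shows Firm~2 then has \emph{no best response}: by \eqref{pure3p2}, charging $y-(1/2-z_2)^2-\varepsilon$ wins both $C_2$ and $C_3$ and earns $\{y-(1/2-z_2)^2-\varepsilon\}(1+x)$ for every $\varepsilon>0$, but at $\varepsilon=0$ the effective prices tie for $C_3$ (both equal $y$) whenever Firm~1 plays $y-(1/2-z_1)^2$, so Firm~2 only gets at most $\{y-(1/2-z_2)^2\}(1+x/2)$ there; the supremum $\{y-(1/2-z_2)^2\}(1+x)$ is approached but never attained, contradicting equilibrium. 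Hence $\eta=0$ in every equilibrium, Firm~1 plays $1-z_1^2$ with probability one, and Firm~2's best response is then uniquely $y-(1/2-z_2)^2$. To repair your proposal you must add this open-set argument (or an equivalent), unless you strengthen \eqref{pure3p1} to a strict inequality, which would prove a weaker proposition than the one stated.
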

\begin{proof}
We show Firm $1$'s profit. By \eqref{pure3p1}, we obtain $\max_{p_1,p_2}\pi_1(p_1,p_2)=1-z_1^2\ge\{y-(1/2-z_1)^2\}(1+x)$. Thus, in an equilibrium, Firm $1$ never obtains a strictly higher profit than $1-z_1^2$. Moreover, Firm $1$ obtains $1-z_1^2$ whenever it charges $p_1=1-z_1^2$ for any $p_2$. Thus, in an equilibrium, Firm $1$ never obtains strictly less than $1-z_1^2$.

In an equilibrium, Firm $1$ never chooses a price other than $1-z_1^2$ and $y-(1/2-z_1)^2$. Next we show that Firm $1$ never chooses $y-(1/2-z_1)^2$ at any positive probability in an equilibrium. If Firm $1$ plays a mixed strategy profile in which it chooses either $y-(1/2-z_1)^2$ at a positive probability $\eta > 0$ or $1-z_1^2$ at a positive probability $1-\eta > 0$, Firm $2$ does not have a best response to the strategy of Firm $1$. This is because by \eqref{pure3p2}, if Firm $2$ chooses a price $y-(1/2-z_2)^2-\varepsilon, (\varepsilon > 0)$ because of Remark~\ref{barznonexist}, it obtains both $C_2$ and $C_3$ and gains a profit $\{y-(1/2-z_2)^2-\varepsilon\}(1+x)$. However, if Firm $2$ chooses $y-(1/2-z_2)^2$, it obtains at most $\{y-(1/2-z_2)^2\}(1+x/2)$ because the prices of both firms, including transportation costs, are equal for $C_3$.

Thus, in an equilibrium, Firm $1$ chooses a price of $1-z_1^2$ at probability $1$. Given this strategy of Firm $1$, by \eqref{pure3p2}, Firm $2$'s best response is uniquely determined. Therefore, a unique equilibrium $(1-z_1^2,y-(1/2-z_2)^2)$ exists. Thus, a unique equilibrium profit $(1-z_1^2, \{y-(1/2-z_2)^2\}(1+x))$ is determined.
\end{proof}

\section{Uniqueness of equilibrium profit}
\begin{lem}\label{ikaarienai}
$\pi_1^* \ge 1-z_1^2$.
\end{lem}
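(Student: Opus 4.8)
The plan is to show that Firm~$1$ can \emph{secure} a profit of at least $1-z_1^2$ by a single deterministic deviation, and then to use the fact that in any (mixed-strategy) equilibrium Firm~$1$'s payoff cannot be less than the value of any price it could unilaterally charge. Thus the inequality $\pi_1^*\ge 1-z_1^2$ is a maximin bound rather than anything requiring the explicit equilibrium strategies \eqref{toitoi1}--\eqref{toitoi2}.

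Concretely, I would fix Firm~$2$'s equilibrium strategy (in general a distribution over prices $p_2$) and evaluate Firm~$1$'s profit when it charges the fixed price $p_1=1-z_1^2$. By \eqref{pi1m1}, at this price the defining inequality $p_1\le 1-z_1^2$ holds with equality, so $C_1$ still buys and $\pi_1^{C_1}(1-z_1^2)=1-z_1^2$. By \eqref{pi1m3}, the profit extracted from $C_3$ is one of $p_1x$, $\tfrac12 p_1x$, or $0$, hence nonnegative for every realization of $p_2$ (recall $p_1=1-z_1^2>0$). Adding the two pieces through \eqref{pi1all} yields the pointwise bound $\pi_1(1-z_1^2,p_2)\ge 1-z_1^2$, valid for \emph{every} value of $p_2$.

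The remaining step is to pass from this pointwise bound to the equilibrium payoff. Since $\pi_1(1-z_1^2,p_2)\ge 1-z_1^2$ holds for all $p_2$, integrating against Firm~$2$'s mixing distribution preserves the inequality, so Firm~$1$'s expected profit from the constant price $1-z_1^2$ is at least $1-z_1^2$. In equilibrium Firm~$1$ plays a best response, so $\pi_1^*$ equals the maximum, over all of Firm~$1$'s prices, of its expected profit against Firm~$2$'s strategy; this maximum is at least the value of the particular deviation above, whence $\pi_1^*\ge 1-z_1^2$.

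I do not expect a genuine obstacle here: the only point requiring care is that the bound $\pi_1(1-z_1^2,p_2)\ge 1-z_1^2$ be checked uniformly in $p_2$, so that the boundary realizations in which $C_3$ is split or entirely lost to Firm~$2$ still leave the guaranteed fan revenue intact. Note that this lemma supplies only the lower half of the identity; the matching upper bound $\pi_1^*\le 1-z_1^2$, which together pin down $\pi_1^*=1-z_1^2$ as used in Proposition~\ref{pricesubgameuniquneness}, is proved separately (ruling out prices above $1-z_1^2$ via Lemma~\ref{misutenai2} and then bounding the support of Firm~$1$'s strategy from below).
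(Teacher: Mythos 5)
Your proposal is correct and matches the paper's own proof, which likewise observes that charging $p_1=1-z_1^2$ secures Firm~$1$ a profit of at least $1-z_1^2$ against any strategy of Firm~$2$, so that $\pi_1^*<1-z_1^2$ would contradict equilibrium. You simply spell out the details the paper leaves implicit (the pointwise bound via \eqref{pi1m1} and \eqref{pi1m3}, nonnegativity of the $C_3$ revenue, and integration against Firm~$2$'s mixing distribution), which is fine but not a different route.
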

\begin{proof}
Firm $1$ obtains a profit $1-z_1^2$ whenever it chooses $p_1=1-z_1^2$ for any of Firm $2$'s strategies. If $\pi_1^* < 1-z_1^2$, Firm $1$ increases its profit to $1-z_1^2$. Thus, this is contradictory to an equilibrium.
\end{proof}

\begin{lem}\label{gennmituniue}
If $\pi_1^*>1-z_1^2$ then $\underline{p}_1^* > \frac{1-z_1^2}{1+x}.$
\end{lem}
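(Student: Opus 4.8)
The plan is to argue by contradiction, exploiting the standard fact that in a mixed-strategy equilibrium of this price subgame every price in the support of Firm $1$'s strategy yields the common equilibrium value $\pi_1^*$. First I would fix Firm $2$'s equilibrium strategy $F_2$ and write Firm $1$'s expected profit at a pure price $p_1$ as $\Pi_1(p_1)=\pi_1^{C_1}(p_1)+E_{F_2}[\pi_1^{C_3}(p_1,p_2)]$, using the demand description in \eqref{pi1m1} and \eqref{pi1m3}. Since $F_1$ is a best response, $\pi_1^*=\max_{p_1}\Pi_1(p_1)$ and $\Pi_1(p_1)=\pi_1^*$ holds for $F_1$-almost every $p_1$ in the support.

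The key observation is a ceiling on Firm $1$'s profit at low prices. For any $p_1\le 1-z_1^2$, equation \eqref{pi1m1} gives $\pi_1^{C_1}(p_1)=p_1$, while \eqref{pi1m3} gives $\pi_1^{C_3}(p_1,p_2)\le p_1 x$, since the most Firm $1$ can extract from $C_3$ is to capture all of it. Hence $\Pi_1(p_1)\le p_1(1+x)$ for every $p_1\le 1-z_1^2$. Because $x\ge 1$, we have $\frac{1-z_1^2}{1+x}<1-z_1^2$, so this ceiling is in force for all prices at or just above $\frac{1-z_1^2}{1+x}$.

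Now I would suppose, contrary to the claim, that $\underline{p}_1^*\le \frac{1-z_1^2}{1+x}$. Every neighborhood of the infimum $\underline{p}_1^*$ carries positive $F_1$-mass, while the set of prices where $\Pi_1\ne\pi_1^*$ is $F_1$-null; hence I can select support points $p_1$ arbitrarily close to $\underline{p}_1^*$ with $\Pi_1(p_1)=\pi_1^*$ and $p_1\le 1-z_1^2$. For such prices the ceiling gives $\pi_1^*=\Pi_1(p_1)\le p_1(1+x)$, and letting $p_1\downarrow\underline{p}_1^*$ yields $\pi_1^*\le \underline{p}_1^*(1+x)\le \frac{1-z_1^2}{1+x}(1+x)=1-z_1^2$, contradicting the hypothesis $\pi_1^*>1-z_1^2$. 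Therefore $\underline{p}_1^*>\frac{1-z_1^2}{1+x}$.

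The step I expect to require the most care is the passage from ``the payoff equals $\pi_1^*$ on the support'' to ``the payoff equals $\pi_1^*$ at prices arbitrarily close to the infimum $\underline{p}_1^*$.'' Because $\Pi_1$ can be discontinuous—through ties in the $C_3$ market and through the fan threshold at $1-z_1^2$—I deliberately avoid evaluating $\Pi_1$ exactly at $\underline{p}_1^*$ and instead take a limit along interior support points where the best-response condition is clean, which is why the contradiction emerges as a limiting inequality rather than an equality pinned at the endpoint.
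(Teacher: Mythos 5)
Your proposal is correct and is essentially the paper's own argument: both derive a contradiction from the ceiling $\Pi_1(p_1)\le p_1(1+x)$ at prices just above $\underline{p}_1^*$, which caps the payoff near the infimum of the support at $1-z_1^2+(1+x)\varepsilon<\pi_1^*$. Your measure-theoretic passage (positive $F_1$-mass in every neighborhood of $\underline{p}_1^*$, payoff equal to $\pi_1^*$ off an $F_1$-null set) simply makes explicit the step the paper compresses into ``contradictory to the definition of $\underline{p}_1^*$ as the lower bound of the support.''
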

\begin{proof}
If $\pi_1^*>1-z_1^2$ and $\underline{p}_1^* \le \frac{1-z_1^2}{1+x}$, Firm $1$ obtains at most $1-z_1^2+(1+x)\varepsilon$ when it chooses $\underline{p}_1^*+\varepsilon, (\varepsilon>0)$. This profit is strictly less than $\pi_1^*$ when $\varepsilon \rightarrow 0$.
Therefore, this is contradictory to a definition of $\underline{p}_1^*$, which is the lower bound of support of an equilibrium strategy.
\end{proof}

\begin{lem}\label{pi1uniquneness}
\begin{align*}
\pi_1^*=1-z_1^2.
\end{align*}
\end{lem}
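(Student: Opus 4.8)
The plan is to combine the lower bound already in hand with a matching upper bound. Lemma~\ref{ikaarienai} gives $\pi_1^* \ge 1-z_1^2$, so it remains to prove $\pi_1^* \le 1-z_1^2$, which I would establish by contradiction, assuming $\pi_1^* > 1-z_1^2$. The first observation is that the support of Firm $1$'s equilibrium price is trapped in a narrow band. By Lemma~\ref{misutenai2} Firm $1$ never prices above its fan-monopoly price, so $\bar{p}_1^* \le 1-z_1^2$; and evaluating the (constant) equilibrium profit at the infimum of the support, where Firm $1$ still sells to its captive $C_1$ and wins $C_3$ with probability at most one, gives $\pi_1^* = \pi_1(\underline{p}_1^*) \le \underline{p}_1^*(1+x)$, hence $\underline{p}_1^* \ge \pi_1^*/(1+x) > (1-z_1^2)/(1+x)$ under the contradiction hypothesis, consistent with Lemma~\ref{gennmituniue}.

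The heart of the argument is to evaluate the same constant profit at the top of Firm $1$'s support. Writing the $C_3$ contest in delivered-price terms, Firm $1$ wins the switchers at $p_1$ precisely when $p_2 > p_1 + T_1$, with $T_1=(1/2-z_1)^2-(1/2-z_2)^2 \ge 0$ (recall $z_1+z_2\le 1$), so that its profit on the relevant range is $p_1\bigl(1+x[1-F^{2*}(p_1+T_1)]\bigr)$. I would show that at the upper end of its support Firm $1$ captures $C_3$ with probability zero: Firm $2$, being located closer to the switchers, competes more aggressively and its delivered-price support tops out no higher than Firm $1$'s, so at Firm $1$'s highest delivered price Firm $2$ undercuts it almost surely. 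Consequently Firm $1$'s profit there comes only from its fans and equals its top price, which by the cap $\bar{p}_1^* \le 1-z_1^2$ is at most $1-z_1^2$. Since the profit is constant across the support, $\pi_1^* \le 1-z_1^2$, contradicting the hypothesis; combined with Lemma~\ref{ikaarienai} this yields $\pi_1^*=1-z_1^2$ (and forces $\bar{p}_1^*=1-z_1^2$).

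The main obstacle is exactly the structural claim used in the preceding paragraph: that at the top of its price support Firm $1$ wins the switchers with vanishing probability. Making this precise requires pinning down the relative upper ends of the two firms' delivered-price supports and ruling out pathological mass points --- in particular, excluding the scenario in which Firm $2$ places an atom strictly above Firm $1$'s top delivered price, which would let Firm $1$ win $C_3$ with positive probability at its own top and break the clean bound. This calls for the standard price-dispersion toolkit: absence of interior atoms in the overlapping region (an atom would give the rival a profitable incentive to shade just below it in delivered price), coincidence of the lower ends of the two supports, and a comparison of the two firms' indifference conditions exploiting $T_1\ge 0$. These are precisely the auxiliary facts assembled in Appendix B, on which a rigorous proof would rest.
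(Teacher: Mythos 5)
Your skeleton --- lower bound from Lemma~\ref{ikaarienai}, the cap $\bar{p}_1^* \le 1-z_1^2$ from Lemma~\ref{misutenai2}, the bound $\underline{p}_1^* > (1-z_1^2)/(1+x)$ as in Lemma~\ref{gennmituniue}, and a contradiction extracted at the top of Firm $1$'s support --- matches the paper's opening moves. But the step you defer to ``the standard price-dispersion toolkit'' is precisely where the paper does its real work, and the tools you name (no interior atoms in the overlap, coincident lower ends of supports) do not close it. Your key structural claim, that at $\bar{p}_1^*$ Firm $1$ wins $C_3$ with probability zero, is in fact \emph{not} derivable in the hypothesized equilibrium: the paper shows that under $\pi_1^* > 1-z_1^2$ the two delivered-price tops must \emph{coincide} and \emph{both} firms must place atoms there, so Firm $1$ wins half of $C_3$ with positive probability at its top. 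The contradiction is then not a zero win probability but the tie-splitting deviation: shading the atom to $\bar{p}_1^*-\varepsilon$ converts the half-share $x/2$ in \eqref{uptie} into the full share $x$ in \eqref{downnige}, which is strictly profitable as $\varepsilon \rightarrow 0$.

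Moreover, the scenario you correctly flag as the obstacle --- Firm $2$ placing an atom at a delivered price strictly above Firm $1$'s delivered top, say at its own fan-monopoly price --- cannot be excluded by interior-atom arguments, because that mass lies outside the overlap region; shading below it need not even reach $C_3$'s reservation constraint. What actually rules it out is the paper's pivotal cross-firm implication, which your proposal is missing entirely: if $\pi_1^* > 1-z_1^2$, then by Lemma~\ref{gennmituniue} Firm $2$ can deviate to $\tilde{p}_2 = \frac{1-(1-z_2)^2}{1+x}+\varepsilon$, win $C_3$ outright by \eqref{neweq} (this is where $z_1+z_2 \le 1$, i.e.\ $T_1 \ge 0$, and condition \eqref{tomonitikai} enter), and hence $\pi_2^* > 1-(1-z_2)^2$. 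Only this makes any price earning pure fan profit suboptimal for Firm $2$, forces its delivered top down to equal Firm $1$'s, and forces both atoms, after which the tie-splitting argument finishes. So the concrete gap is the derivation $\pi_1^* > 1-z_1^2 \Rightarrow \pi_2^* > 1-(1-z_2)^2$; without it, your claimed almost-sure undercutting at Firm $1$'s top cannot be established, and in the residual double-atom case it is false, requiring the optimality-of-the-atom deviation argument instead.
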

\begin{proof}
If $\pi_1^* > 1-z_1^2$ holds, then, it follows by \eqref{tomonitikai} and Lemma\ref{gennmituniue} that a positive number $\varepsilon$ exists such that
\begin{align}
&\underline{p}_1^*-\varepsilon > \frac{1-z_1^2}{1+x}, \label{check1}\\
&\frac{1-(1-z_2)^2}{1+x}+\varepsilon < \min\{1-(1-z_2)^2,y-(1/2-z_2)^2\}. \label{check2}
\end{align}
holds.

Now, we consider Firm $2$'s profit when it chooses
\begin{align*}
\tilde{p}_2 = \frac{1-(1-z_2)^2}{1+x}+\varepsilon
\end{align*}
for Firm $1$'s strategy. By \eqref{check1},
\begin{align}
&\underline{p}_1^*+(1/2-z_1)^2-[\tilde{p}_2+(1/2-z_2)^2]\notag\\
&=\underline{p}_1^*-\varepsilon-\frac{1-(1-z_2)^2}{1+x}+(1/2-z_1)^2-(1/2-z_2)^2\notag\\
&>\frac{1-z_1^2}{1+x}-\frac{1-(1-z_2)^2}{1+x}+(1/2-z_1)^2-(1/2-z_2)^2 \ge 0 \label{neweq}
\end{align}
are satisfied. Thus, we have $\underline{p}_1^*+(1/2-z_1)^2>\tilde{p}_2+(1/2-z_2)^2$. Additionally, by \eqref{check2}, we find that Firm $2$ gains a profit $\tilde{p}_2(1+x)=1-(1-z_2)^2+\varepsilon(1+x)$ from both $C_2$ and $C_3$. Thus, we have $\pi_2^* \ge \tilde{p}_2(1+x) > 1-(1-z_2)^2$. Therefore, we obtain $\pi_2^* > 1-(1-z_2)^2$, if $\pi_1^* > 1-z_1^2$.

Now, by Lemma \ref{misutenai2}, we have
\begin{align*}
\bar{p}_1^* \le 1-z_1^2.
\end{align*}
Therefore, it suffices to show that $\bar{p}_2^* \ge \bar{p}_1^*+(1/2-z_1)^2-(1/2-z_2)^2$. If $\bar{p}_2^* < \bar{p}_1^*+(1/2-z_1)^2-(1/2-z_2)^2$, Firm $1$ obtains at most $1-z_1^2$ when it chooses $\bar{p}_1^*$ because it never obtains $C_3$. This is contradictory to $\pi^*_1>1-z_1^2$.

Here, we find that $\pi_2^*>1-(1-z_2)^2$ is inconsistent with $\bar{p}_2^* > \bar{p}_1^*+(1/2-z_1)^2-(1/2-z_2)^2$. This is because Firm $2$ obtains at most $1-(1-z_2)^2$ when it chooses $\bar{p}_2^*$. Thus, we obtain $\bar{p}_2^* = \bar{p}_1^*+(1/2-z_1)^2-(1/2-z_2)^2$.

Assuming that Firm $1$ does not choose $\bar{p}_1^*$ with a positive probability, Firm $2$ obtains at most $1-(1-z_2)^2$ when it chooses $\bar{p}_2^*$. This is contradictory to $\pi_2^*>1-(1-z_2)^2$. In the same way, if Firm $2$ does not choose $\bar{p}_2^*$ with a positive probability, Firm $1$ obtains at most $1-z_1^2$ when it chooses $\bar{p}_1^*$. This is contradictory to $\pi_1^*>1-z_1^2$.
Thus, both firms choose $\bar{p}_i^*$ with a positive probability.

Let $F^i(p)$ be a probability distribution function when firm $i$ chooses a price less than $p$.
We define Firm $1$'s profit as follows when Firm $1$ chooses $\bar{p}_1^*$,
\begin{align}\label{uptie}
\bar{p}_1^*\left[1+\left\{1-\lim_{\varepsilon \rightarrow 0}F^2(\bar{p}^*_2-\varepsilon)\right\}\frac{x}{2}\right].
\end{align}
If Firm $1$ chooses $\bar{p}_1^*-\varepsilon$, it obtains at least
\begin{align}\label{downnige}
(\bar{p}_1^*-\varepsilon)\left[1+\left\{1-\lim_{\varepsilon \rightarrow 0}F^2(\bar{p}_2^*-\varepsilon)\right\}x\right].
\end{align}
If $\varepsilon \rightarrow 0$, then \eqref{uptie} $<$ \eqref{downnige}. This is contradictory to an optimality that Firm $1$ chooses $\bar{p}_1^*$ with a positive probability. Thus, we obtain $\pi_1^* = 1-z_1^2$.
\end{proof}

\begin{lem}\label{pi2upperbound}
\begin{align*}
\pi_2^* = 1-z_1^2+[(1/2-z_1)^2-(1/2-z_2)^2](1+x).
\end{align*}
\end{lem}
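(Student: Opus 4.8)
The plan is to determine $\pi_2^*$ by first pinning down the lower bound $\underline{p}_2^*$ of Firm $2$'s equilibrium support and then evaluating Firm $2$'s profit there, exactly as outlined in the main text (``once we determine the lower bound of the support of Firm $1$'s strategy and its equilibrium profit, we also determine Firm $2$'s lower bound''). The starting point is Lemma~\ref{pi1uniquneness}, which gives $\pi_1^*=1-z_1^2$, together with the standard observation that at the infimum of a firm's support it undercuts its rival for $C_3$ with probability one.

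First I would show $\underline{p}_1^*=\frac{1-z_1^2}{1+x}$. Writing the effective prices faced by the informed consumers as $q_i=p_i+(1/2-z_i)^2$, a firm captures $C_3$ precisely when its effective price is the lower of the two. At the bottom of Firm $1$'s support, Firm $1$ charges the lowest effective price it ever uses, so (using that there is no atom at the bottom, hence that ties occur with probability zero) it serves both $C_1$ and $C_3$ with probability one, earning $\underline{p}_1^*(1+x)$; note $\underline{p}_1^*=\frac{1-z_1^2}{1+x}\le 1-z_1^2$, so its fans' participation constraint does not bind. Indifference across the support forces this profit to equal $\pi_1^*=1-z_1^2$, giving $\underline{p}_1^*=\frac{1-z_1^2}{1+x}$.

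Next I would argue that the two supports share a common lower endpoint in effective-price terms, i.e. $\underline{p}_1^*+(1/2-z_1)^2=\underline{p}_2^*+(1/2-z_2)^2$, equivalently $\underline{p}_2^*=\underline{p}_1^*+T_1$ with $T_1=(1/2-z_1)^2-(1/2-z_2)^2$. The key deviation argument is: if one firm's effective-price support extended strictly below the other's, then on that extra interval that firm would win $C_3$ with probability one, so its profit $p_i(1+x)$ would be strictly increasing in its own price there, contradicting indifference and ruling out any mass at the bottom of that interval; the two effective lower bounds must therefore coincide. Evaluating Firm $2$'s profit at $\underline{p}_2^*$, where it wins $C_3$ with probability one, then yields
\begin{align*}
\pi_2^*=\underline{p}_2^*(1+x)=\left(\frac{1-z_1^2}{1+x}+T_1\right)(1+x)=1-z_1^2+[(1/2-z_1)^2-(1/2-z_2)^2](1+x),
\end{align*}
which is the claimed value.

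The main obstacle will be the second step: rigorously establishing that the two firms' effective-price supports have the same lower endpoint, and in particular ruling out atoms at the bottom so that the ``wins $C_3$ with probability one'' conclusion is legitimate there. This is the careful no-gap/no-atom reasoning characteristic of Varian-type price-dispersion equilibria. A secondary check is that at $\underline{p}_2^*$ Firm $2$ still serves its fans, i.e. $\underline{p}_2^*\le 1-(1-z_2)^2$; this is guaranteed in the present regime, where both $z_1$ and $z_2$ are close to $1/2$, by \eqref{tomonitikai}.
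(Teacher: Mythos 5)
Your route is genuinely different from the paper's, and---modulo the obstacle you yourself flag---it reaches the right value. The paper never pins down $\underline{p}_1^*$ exactly, never shows the two supports share a common effective lower endpoint, and needs no no-gap/no-atom analysis for this lemma. Instead it proves the two inequalities separately by explicit $\varepsilon$-deviations: (i) if $\pi_2^* > 1-z_1^2+T_1(1+x)$, then since Firm $2$'s profit at any support price is at most $p_2(1+x)$, the lower bound would satisfy $\underline{p}_2^* > \frac{1-z_1^2}{1+x}+T_1$, so Firm $1$ could charge a price in the open interval $\left(\frac{1-z_1^2}{1+x},\,\underline{p}_2^*+(1/2-z_2)^2-(1/2-z_1)^2\right)$, win $C_3$ with probability one, and earn strictly more than $\pi_1^*=1-z_1^2$, contradicting Lemma~\ref{pi1uniquneness}; (ii) if $\pi_2^* < 1-z_1^2+T_1(1+x)$, Firm $2$ deviates to $\dot{p}_2=\frac{1-z_1^2}{1+x}+T_1-\varepsilon$, which strictly undercuts all of Firm $1$'s support (only $\underline{p}_1^*\ge\frac{1-z_1^2}{1+x}$ is needed, which follows directly from $\pi_1^*=1-z_1^2$), so ties occur with probability zero \emph{by construction} and Firm $2$ earns $\dot{p}_2(1+x)$, contradicting optimality of $\pi_2^*$ for small $\varepsilon$. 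This design is precisely what lets the paper sidestep your ``main obstacle'': because both deviations strictly undercut the rival's support, the ``wins $C_3$ with probability one'' conclusions come for free, whereas your constructive route must first establish coincidence of the effective lower endpoints and absence of atoms there. Your approach buys a sharper structural picture of the equilibrium (it effectively derives the lower bounds appearing later in Proposition~\ref{toitoi1-z1^2+rent}), at the cost of heavier Varian-type support machinery that the uniqueness claim does not actually require.

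Two caveats. First, as ordered your argument is circular: step 1 evaluates Firm $1$'s profit at $\underline{p}_1^*$ as $\underline{p}_1^*(1+x)$, which already presupposes step 2's conclusions (Firm $2$'s effective support does not extend below Firm $1$'s, and Firm $2$ has no atom at the common bottom); the no-gap/no-atom facts must come first, and then $\underline{p}_1^*=\frac{1-z_1^2}{1+x}$ follows. Your sketched argument for them (profit $p(1+x)$ strictly increasing on an undercutting interval contradicts constancy of payoff on the support) is the right one, but note that it also needs the fans' participation and $C_3$'s reservation constraints to be slack on that interval. Second, your closing check $\underline{p}_2^*\le 1-(1-z_2)^2$ is misattributed: \eqref{tomonitikai} does not deliver it. The paper verifies $\frac{1-z_1^2}{1+x}+(1/2-z_1)^2 \le 1-(1-z_2)^2+(1/2-z_2)^2$ from the standing assumption $x\ge 1$ (the left-hand side is at most $3/4$, while the right-hand side equals $z_2+1/4\ge 3/4$); what \eqref{tomonitikai} guarantees is the \emph{other} constraint, $y-\bigl(\underline{p}_2^*+(1/2-z_2)^2\bigr)>0$, i.e.\ that $C_3$ is still willing to buy at the bottom of the support.
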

\begin{proof}
By Lemma~\ref{pi1uniquneness}, $\pi_1^* =1-z_1^2$. We obtain
\begin{align*}
\underline{p}_1^* \ge \frac{1-z_1^2}{1+x}.
\end{align*}

Assuming that $\pi_2^* > 1-z_1^2+[(1/2-z_1)^2-(1/2-z_2)^2](1+x)$, we obtain
$\underline{p}^*_2 > \frac{1-z_1^2}{1+x}+(1/2-z_1)^2-(1/2-z_2)^2$.
However, given this strategy of Firm $2$, Firm $1$ always obtains a profit $\pi_1 > 1-z_1^2=\pi_1^*$.
This is because Firm $1$ can always charge a price of $p_1 \in (\frac{1-z_1^2}{1+x},\underline{p}^*_2+(1/2-z_2)^2-(1/2-z_1)^2)$ to $C_3$ consumers,
which is strictly higher than $\frac{1-z_1^2}{1+x}$ and lower than the lower bound of Firm $2$'s price $\underline{p}^*_2+(1/2-z_2)^2$, which includes transportation costs. Thus, Firm $1$ increases its profit when it deviates to a price $p_1$ in this open interval. This is contradictory to $\pi_1^*$ being the maximum profit.

Thus, we determine that if $\pi_1^*=1-z_1^2$, then $\pi_2^* \le 1-z_1^2+[(1/2-z_1)^2-(1/2-z_2)^2](1+x)$.

Next, we show that it is not the case that $\pi_2^* < 1-z_1^2+[(1/2-z_1)^2-(1/2-z_2)^2](1+x)$.
Assuming that $\pi_2^* < 1-z_1^2+[(1/2-z_1)^2-(1/2-z_2)^2](1+x)$, if Firm $2$ chooses
\begin{align*}
\dot{p}_2 = \frac{1-z_1^2}{1+x}+(1/2-z_1)^2-(1/2-z_2)^2-\varepsilon,
\end{align*}
it obtains $C_3$ because $\dot{p}_2$ satisfies $\underline{p}_1^*+(1/2-z_1)^2 > \dot{p}_2+(1/2-z_2)^2$ and $y-(\dot{p}_2+(1/2-z_2)^2)=y-\frac{1-z_1^2}{1+x}-(1/2-z_1)^2+\varepsilon > 0$ is satisfied. Here,
by \eqref{tomonitikai}, $\varepsilon > 0$ is chosen as being satisfied by $y-\frac{1-z_1^2}{1+x}-(1/2-z_1)^2+\varepsilon > 0$.

Now, by $x \ge 1$, we obtain $\frac{1-z_1^2}{1+x}+(1/2-z_1)^2 \le 1-(1-z_2)^2+(1/2-z_2)^2$. Because the left-hand side of the equation is at most $3/4$, the minimum of the right-hand side is $3/4$.
Thus, $C_2$ purchases Firm $2$'s product because $1-(1-z_2)^2-\dot{p}_2 \ge \varepsilon > 0$.

Therefore, we obtain Firm $2$'s profit as follows,
\begin{align*}
\dot{\pi}_2=\dot{p}_2(1+x)={1-z_1^2}+[(1/2-z_1)^2-(1/2-z_2)^2](1+x)-\varepsilon(1+x).
\end{align*}
However, this contradicts the definition that $\pi_2^*$ is the maximum profit because an $\varepsilon$ exists such that
$\pi_2^* < \dot{\pi}_2$ when we take a small enough $\varepsilon$.

Thus, we obtain $\pi_2^* = 1-z_1^2+[(1/2-z_1)^2-(1/2-z_2)^2](1+x)$ if $\pi_1^*=1-z_1^2$.
\end{proof}

\section{Another proofs}
\begin{proof}[Proof of Proposition~\ref{toitoi1-z1^2+rent}]
We show that for any of given Firm $2$'s mixed strategies, Firm $1$ gains a constant expected profit when it adopts a strategy that belongs to the support of its strategy.
Given $F^{2*}(p)$ of Firm $2$, Firm $1$ charges $p_1 \in [\frac{1-z_1^2}{1+x},y-(1/2-z_1)^2)$, Firm $1$ gains an expected profit
\begin{align*}
p_1+p_1x(1-F^{2*}(p_1+T_1))=p_1+p_1x\left(\frac{\pi_1^*-p_1}{p_1x}\right)=\pi_1^*
\end{align*}
because Firm $1$ obtains $C_3$ consumers if Firm $2$ charges $p$ such that $p > p_1+T_1$ holds.
Similarly, Firm $2$'s expected profit is $\pi_2^*$ when it charges $p_2 \in [\frac{1-z_1^2}{1+x}+T_1,y-(1/2-z_2)^2]$
because Firm $2$ obtains $C_3$ consumers if Firm $1$ charges $p$ such that $p > p_2-T_1$ holds.

Now, we show that \eqref{toitoi1}-\eqref{toitoi2} is an equilibrium strategy profile.
By $\pi_1^* = 1-z_1^2$, Firm $1$ does not deviate to a price strictly lower than $\frac{1-z_1^2}{1+x}$.
$\frac{1-z_1^2}{1+x}$ is the lower bound of Firm $1$'s support of its strategy. Thus Firm $2$ never charges a price
strictly lower than $\frac{1-z_1^2}{1+x}+T_1$, which is the lower bound of Firm $2$'s strategy support.

Next, we show that Firm $1$ never deviates to a price $\dot{p}_1$ such that $y -(1/2-z_1)^2 < \dot{p}_1 < 1-z_1^2$ holds.
If Firm $1$ charges $\dot{p}_1$, then, it obtains only $C_1$ consumers and cannot obtain $C_3$ consumers because $\dot{p}_1$ has exceeded the reservation price of $C_3$ consumers. Thus, Firm $1$ obtains a strictly lower profit than $1-z_1^2$ if it charges $\dot{p}_1$
Therefore, Firm $1$ does not deviate.

Next, we show that Firm $1$ never deviates to a price $\ddot{p}_1=y-(1/2-z_1)^2$. If Firm $1$ charges $\ddot{p}_1$, it does not obtain
$C_3$ consumers except when Firm $2$ charges $p_2=y-(1/2-z_2)^2$. In this case, the $C_3$ market is divided equally between both firms.
Thus, we obtain Firm $1$'s expected profit as follows.
\begin{align*}
\ddot{p}_1\left(1+\frac{x}{2}\left(\frac{\pi_1^*-\ddot{p}_1}{\ddot{p}_1x}\right)\right)
=\frac{1}{2}\left(\ddot{p}_1+\pi_1^*\right).
\end{align*}
However, by $z_1+z_2 \le 1$ and $y-(1/2-z_2)^2 < 1-(1-z_2)^2$, we obtain $\frac{1}{2}\left(\ddot{p}_1+\pi_1^*\right)-\pi_1^*=\frac{1}{2}\left(\ddot{p}_1-\pi_1^*\right) < 0$. Thus, Firm $1$ never deviates to a price $\ddot{p}_1=y-(1/2-z_1)^2$.

In the same way, we show that Firm $2$ does not deviate to a price other than $p_2 \in [\frac{1-z_1^2}{1+x}+T_1,y-(1/2-z_2)^2]$.
\end{proof}

\begin{proof}[Proof of Proposition~\ref{mixmixalpha1}]
It is sufficient to show Firm $1$'s case because of symmetry. We show that given $G_2^{**}(z_2)$, Firm $1$ can gain an expected profit equal to $1$ on either $z_1= 0$ or all $z_1 \in [1-\hat{z}_2,1/2]$ intervals.

Firm $1$ can gain a profit equal to $1$ at $z_1=0$. Thus, it is sufficient to show that Firm $1$ gains an expected profit equal to $1$ when it is located at any $z_1 \in [1-\hat{z}_2,1/2]$ interval.
Substituting $G_2^{**}$ with \eqref{firm1EXPECTEDPAYOFF}, using partial integration, we have
\begin{align*}
E[\Pi_1^*]&=(1-z_1^2)[G_2^{**}(z_2)]^{1-z_1}_{1/2}\notag\\
&+\left[[1-(1-z_2)^2+\{(1/2-z_2)^2-(1/2-z_1)^2\}(1+x)]G_2^{**}(z_2)\right]^{\hat{z}_2}_{1-z_1}\notag\\
&-\int^{\hat{z}_2}_{1-z_1}[1-(1-z_2)^2+\{(1/2-z_2)^2-(1/2-z_1)^2\}(1+x)]^{\prime}G_2^{**}(z_2)dz_2\notag\\
&+\left\{1-G_2^{**}(\hat{z}_2)\right\}\left\{y-(1/2-z_1)^2\right\}(1+x)\\
=&[1-(1-\hat{z}_2)^2+\{(1/2-\hat{z}_2)^2-(1/2-z_1)^2\}(1+x)]G_2^{**}(\hat{z}_2)\\
&-\int^{\hat{z}_2}_{1-z_1}\{(2z_2-1)x+1\}G_2^{**}(z_2)dz_2\\
&+\left\{1-G_2^{**}(\hat{z}_2)\right\}\left\{y-(1/2-z_1)^2\right\}(1+x)\\
=&[1-(1-\hat{z}_2)^2]G_2^{**}(\hat{z}_2)\\
&+\left\{1-G_2^{**}(\hat{z}_2)\right\}\left\{y-(1/2-z_1)^2\right\}(1+x)\\
&+[(1/2-\hat{z}_2)^2-(1/2-z_1)^2](1+x)G_2^{**}(\hat{z}_2)\\
&-(1+x)[(z_2-1/2)^2]^{\hat{z}_2}_{1-z_1}\\
=&[1-(1-\hat{z}_2)^2]G_2^{**}(\hat{z}_2)\\
&+\left\{1-G_2^{**}(\hat{z}_2)\right\}\left\{y-(1/2-z_1)^2\right\}(1+x)\\
&-\left\{1-G_2^{**}(\hat{z}_2)\right\}[(1/2-\hat{z}_2)^2-(1/2-z_1)^2](1+x)\\
=&[1-(1-\hat{z}_2)^2]G_2^{**}(\hat{z}_2)+\{1-G_2^{**}(\hat{z}_2)\}\left\{y-(\hat{z}_2-1/2)^2\right\}(1+x)=1.
\end{align*}
Thus, the expected profit is constant in support of an equilibrium.

Finally, we show that for any given strategy of Firm $2$, Firm $1$ does not choose $z_1 \in (0,\hat{z}_1)$.
Given $G_2^{**}(\hat{z}_2)$, Firm $1$ does not deviate to $z_1 \in (0,\bar{z}_1]$.
This is because Firm $1$ gains $1-\dot{z}_1^2 < 1$ when it is located on $\dot{z}_1 \in (0,\bar{z}_1]$.

Given $G_2^{**}(\hat{z}_2)$, Firm $1$ does not choose $z_1 \in (\bar{z}_1,\hat{z}_1)$ when it uses a pure strategy.
Note that when $1-\hat{z}_2=\hat{z}_1$ holds, we have
\begin{align*}
(1-\hat{z}_1^2)G_2^{**}(\hat{z}_2)+\{1-G_2^{**}(\hat{z}_2)\}\left\{y-(\hat{z}_1-1/2)^2\right\}(1+x)=1.
\end{align*}
Let $f(z_1)$ denote a firm $1$'s profit when firm $1$ chooses $z_1 \in (\bar{z}_1,\hat{z}_1)$. We have
\begin{align}\label{hatz1def}
f(z_1)=(1-z_1^2)G_2^{**}(\hat{z}_2)+\{1-G_2^{**}(\hat{z}_2)\}\left\{y-(z_1-1/2)^2\right\}(1+x).
\end{align}
When derivatives of \eqref{hatz1def} are taken with respect to $z_1$, we have
\begin{align*}
f^\prime(z_1)=-2z_1G_2^{**}(\hat{z}_2)+\{1-G_2^{**}(\hat{z}_2)\}(1-2z_1)(1+x).
\end{align*}
We determine that $f^{\prime\prime}=-2-2(1-G)x < 0$, and we have\begin{align*}
f^\prime(\hat{z}_1)=-2\hat{z}_1G_2^{**}(\hat{z}_2)+\{1-G_2^{**}(\hat{z}_2)\}(1-2\hat{z}_1)(1+x)=0.
\end{align*}

Thus, we determine that both $f^\prime \ge 0$ and $f^{\prime\prime} < 0$ hold for all $z_1 \in [\bar{z}_1,\hat{z}_1]$.
\eqref{hatz1def} is monotone increasing. Therefore, Firm $1$ deviates to the left-hand side of $\hat{z}_1$.
\end{proof}


\begin{thebibliography}{num}
\bibitem{sat}Anderson SP, de Palma A, Thisse JF (1992) Discrete choice theory of product differentiation, MIT Press, Cambridge, MA
\bibitem{aa}Ansari A, Economides N, Steckel J (1998) The max-min-min principle of product differentiation. J REGIONAL SCI 38:207-230
\bibitem{aoyagiokabe}Aoyagi M, Okabe A (1993) Spatial competition of firms in a two-dimensional bounded market. REG SCI URBAN ECON 23:259-289
\bibitem{agt}D'Aspremont C, Gabszewicz JJ, Thisse JF (1979) On\,~Hotelling's ‘stability in competition. ECONOMETRICA, 47:1145-1151
\bibitem{BesterGEB}Bester H, de Palma A, Leiningerc W, Thomasd J, von Thadden EL (1996) A noncooperative analysis of Hotelling's location game. GAME ECON BEHAV 12-2:165-186
\bibitem{cn}Caplin A, Nalebuff B (1991) Aggregation and imperfect competition on the existence of equilibrium. ECONOMETRICA 59:25-59
\bibitem{P.Dasgupta1986}Dasgupta P, Maskin E (1986) The existence of equilibrium in discontinuous economic games, II  applications. REV ECON STUD 53:27-41
\bibitem{Eaton2012}Eaton BC, Tweedle J (2012) A Hotelling style model of spatial competition for a convenience good. ANN REGIONAL SCI 49:447-469
\bibitem{ecotwo}Economides N (1986) Nash equilibrium in duopoly with products defined by two characteristics. RAND J ECON 17:431-439
\bibitem{ft}Friedman JW, Thisse JF (1993) Partial collusion fosters minimum product differentiation. RAND J ECON 24:631-645
\bibitem{hot}Hotelling H, (1929) Stability in competition. ECON J 39:41-57
\bibitem{ij}Irmen A, Thisse JF (1998) Competition in multi-characteristics spaces Hotelling was almost right. J ECON THEORY 78:76-102
\bibitem{jehiel}Jehiel P (1992) Product differentiation and price collusion. INT J IND ORGAN 10:633-641
\bibitem{matsumatsu}Matsumura T, Matsushima N (2009) Cost differentials and mixed strategy equilibria in a Hotelling model. ANN REGIONAL SCI 43:215-234
\bibitem{matsumatsu2}Matsumura T, Matsushima N (2011) Collusion, agglomeration, and heterogeneity of firms. GAME ECON BEHAV 72:306-313
\bibitem{op1}Osborne MJ, Pitchik C (1986) The nature of equilibrium in a location model. INT ECON REV 27:223-237
\bibitem{op2}Osborne MJ, Pitchick C (1987) Equilibrium in Hotelling's model of spatial competition. ECONOMETRICA 55:911-921
\bibitem{ThisseFour}De Palma A, Ginsburgh V, Papageorgiou YY, Thisse JF (1985) The principle of minimum differentiation holds under sufficient heterogeneity. ECONOMETRICA 53:767-782
\bibitem{geb}Roth KP, Zhao G (2003) Nonminimal product differentiation as a bargaining outcome. GAME ECON BEHAV 42:267-280
\bibitem{salopstiglitz}Salop SC, Stiglitz, J (1977) Bargains and rip-offs  a model of monopolistically competitive price dispersion. REV ECON STUD 44:493-510
\bibitem{tabuchi}Tabuchi T (1994) Two-stage two-dimensional spatial competition between two firms. REG SCI URBAN ECON 24:207-227
\bibitem{H.R.Varian1980ModelofSales}Varian HR (1980) A model of sales. AM ECON REV 70:651-659
\end{thebibliography}
\end{document}